\newtheorem{thm}{Theorem}
\newtheorem{lem}[thm]{Lemma}
\newcommand{\eps}{\varepsilon}
\newcommand{\diam}{\mathrm{diam}}
\newcommand{\ud}[1]{\bigcirc_{#1}}
\newcommand{\OPT}{\mathit{OPT}}
\newcommand{\E}{\mathrm{E}}
\newcommand{\CC}{\mathcal{C}}
\begin{document}
\title{Shifting coresets: obtaining linear-time approximations for \\unit disk graphs and other geometric intersection graphs}

\author{%
	Guilherme D. da Fonseca\\
		IUT and LIMOS\\
		Universit\'e d'Auvergne \\
		fonseca@isima.fr\\
	\and
	Vin\'icius Gusm\~ao Pereira de S\'a\\
		Instituto de Matem\'atica\\
		Universidade Federal do Rio de Janeiro \\
		vigusmao@dcc.ufrj.br\\
	\and
	Celina Miraglia Herrera de Figueiredo\\
		COPPE, Universidade Federal do Rio de Janeiro \\
		celina@cos.ufrj.br
}

\date{To appear at International Journal of Computational Geometry and Applications}

\maketitle

\begin{abstract}
Numerous approximation algorithms for problems on unit disk graphs have been proposed in the literature, exhibiting a sharp trade-off between running times and approximation ratios.
We introduce a variation of the known shifting strategy that allows us to obtain linear-time constant-factor approximation algorithms for such problems. To illustrate the applicability of the proposed variation, we obtain results for three well-known optimization problems.
Among such results, the proposed method yields linear-time $(4+\eps)$-{ap\-prox\-i\-ma\-tions} for the maximum-weight independent set and the minimum dominating set of unit disk graphs, thus bringing significant performance improvements when compared to previous algorithms that achieve the same approximation ratios. 
Finally, we use axis-aligned rectangles to illustrate that the same method may be used to derive linear-time approximations for problems on other geometric intersection graph classes.
\end{abstract}

\section{Introduction} \label{s:introduction}

Linear- and near-linear-time approximation algorithms constitute an active topic of research, even for problems that can be solved exactly in polynomial time, such as maximum flow and maximum matching~\cite{agarwal2014,maxflow,matching14,maxflow14,matchings}. This paper employs a variation of the \emph{shifting strategy} introduced by Hochbaum and Maass three decades ago~\cite{shifting}. While the shifting strategy usually leads to high running times, our variation, which we call the \emph{shifting coresets method}, allows us to obtain linear-time constant-approximation algorithms for problems on unit disk graphs.
Three algorithms for well-studied optimization problems on unit disk graphs are presented to illustrate the method. A fourth algorithm, for maximum-weight independent sets of axis-aligned rectangles in the plane, is also given, showing that the method applies to other geometric intersection graph classes as well.

\emph{Geometric intersection graphs} are graphs whose $n$ vertices correspond to geometric objects and whose $m$ edges correspond to pairs of intersecting objects. Several classes of geometric intersection graphs are defined by restricting the shape of the objects: disks, unit disks, squares, rectangles, etc. Approximation algorithms for such classes are either \emph{graph-based}, when they receive as input solely the adjacency representation of the graph, or \emph{geometric}, when the input consists of a geometric description of the objects. Note that when the goal is to design $O(n)$-time algorithms, the geometric representation is required, since the number $m$ of edges in a geometric intersection graph can be as high as~$\Theta(n^2)$.

\emph{Polynomial-time approximation schemes} (PTASs) have been developed for several optimization problems on multiple classes of intersection graphs. Among such problems, we focus on the following three in the present paper:

\begin{itemize}
\item \emph{Maximum-weight independent set} (WIS): Given a graph with weighted vertices, find the maximum-weight subset of mutually non-adjacent vertices.

\item \emph{Minimum dominating set} (DS): Given a graph, find the minimum-cardinality vertex subset $D$ such that every vertex not in $D$ is adjacent to some vertex in $D$.

\item \emph{Minimum vertex cover} (VC): Given a graph, find the minimum-cardinality vertex subset $C$ such that every edge is incident on some vertex in $C$.
\end{itemize}

The shifting strategy has given rise to geometric PTASs for several problems on geometric intersection graphs~\cite{shifting,Aga98,ptas-mcds,ptas-geometric}. A set of $n$ geometric objects has \emph{constant diameter} if the Euclidean distance between any two points contained inside the objects is upper-bounded by a constant.
Essentially, the shifting strategy reduces the original problem to a set of subproblems of constant diameter. Such a reduction takes $O(n)$ time and yields a $(1+\eps)$-{ap\-prox\-i\-ma\-tion} to the original problem once the solutions to the subproblems are known.
Each subproblem is then solved exactly by exploiting the fact that it has constant diameter. For example, it is often possible to show by packing arguments that an input instance whose diameter is $d$ admits a solution with $c = O(d^2)$ vertices, so that exhaustive enumeration can find the optimal solution in roughly $O(n^c)$ time. As a consequence, the running times of such PTASs are high-degree polynomials. 
Other PTASs that are not based on the shifting strategy also exist, but their complexities are usually even higher~\cite{ptas-graph-journal}.

Among countless classes of geometric intersection graphs, unit disk graphs are arguably the most studied one, owing largely to their applicability in wireless networks~\cite{ptas-graph-journal,heuristics}. A \emph{unit disk graph} (UDG) is the intersection graph of unit disks in the plane, and its vertices are usually represented by the coordinates of the disk centers. With respect to this representation, two vertices are adjacent if the corresponding points (the disk centers) are within Euclidean distance at most $2$ from one another. Next, we review the state of the art of the three aforementioned optimization problems on the class of UDGs.

The minimum dominating set problem (DS) on UDGs admits some PTASs, the fastest of which is geometric and provides a $(1+\eps)$-approximation in $n^{O(1/\eps^2)}$ time~\cite{ptas-geometric,ptas-graph-journal}. For the sake of comparison with the linear-time $(4+\eps)$-approximation algorithm introduced in Section~\ref{s:DS}, such a PTAS produces a $4$-approximation in roughly $O(n^{10})$ time. The high running times of the existing PTASs have therefore motivated the study of faster constant-factor approximation algorithms.
Examples of graph-based algorithms include a $44/9$-{ap\-prox\-i\-ma\-tion} that runs in $O(n + m)$ time and a $43/9$-{ap\-prox\-i\-ma\-tion} that runs in $O(n^2m)$ time~\cite{linear-tcs}. Among the geometric algorithms, we cite
the original $5$-{ap\-prox\-i\-ma\-tion}, which can be implemented in $O(n)$ time if the floor function and constant-time hashing are available~\cite{heuristics}; 
a $44/9$-{ap\-prox\-i\-ma\-tion} that uses local improvements and runs in $O(n \log n)$ time~\cite{linear-tcs};
a $4$-{ap\-prox\-i\-ma\-tion} that uses grids and runs in $O(n^8 \log n)$ time~\cite{cccg-journal};
and an improved $4$-{ap\-prox\-i\-ma\-tion} that uses hexagonal grids and runs in $O(n^6 \log n)$ time~\cite{hex-journal}.

The maximum-weight independent set problem (WIS) also admits PTASs for UDGs, the fastest of which is geometric and attains a $(1+\eps)$-approximation in $O(n^{4\lceil2/\eps\sqrt{3}\rceil})$ time~\cite{ptas-geometric,ptas-graph-journal,mwis-ptas}. 
Using a slab decomposition and the longest path in a DAG, Matsui~\cite{mwis-ptas} showed how to obtain an approximation ratio of $(1+(2/\sqrt{3})+\eps) < 2.16$ in $O(n^2)$ time.\footnote{After this article, the running time has been reduced to $O(n \log^3 n)$ using advanced data structures~\cite{is2apx2}.} Alternatively, a geometric $5$-{ap\-prox\-i\-ma\-tion} can be obtained in $O(n \log n)$ time by a greedy approach that considers the vertices in decreasing order of weights, or in $O(n+m)$ time in the graph-based version~\cite{heuristics}. In comparison, the algorithm presented in Section~\ref{s:WIS} obtains a $(4+\eps)$-approximation in linear time. 
Note that, for the unweighted version, a geometric greedy approach that considers the vertices from left to right can be implemented to give a $3$-{ap\-prox\-i\-ma\-tion} in $O(n)$ time with floor function and constant-time hashing~\cite{heuristics}. The high running time of the existing PTASs also motivated the discovery of an $O(n^3)$-time $2$-{ap\-prox\-i\-ma\-tion} algorithm for the unweighted version~\cite{is2apx}.

The minimum vertex cover problem (VC) for UDGs is a rare example for which a geometric linear-time approximation scheme is known~\cite{marx05}. Previously, two high-complexity PTASs (one geometric and one graph-based) had been proposed~\cite{ptas-geometric,vc-graph}.

Another important class of geometric intersection graphs is defined by axis-aligned rectangles in the plane. Maximum-weight independent sets in such graphs are widely studied due to their application in data mining, map labeling, and VLSI design~\cite{Aga98,Aga06,Chan03,Chan04,Adamaszek13}. However, the lack of polynomial constant-factor approximation algorithms has motivated the investigation of more restricted subclasses, for example the intersection graphs of squares and of unit-height rectangles~\cite{Aga98,Chan04}. Also, PTASs for the WIS exist for the class of fat convex objects, which generalizes unit disk graphs and several intersection graphs of rectangles~\cite{Chan03,Seidel}. The fastest PTAS for the WIS on fat convex objects takes $n^{O(1/\eps)}$ time and space, or alternatively $n^{O(1/\eps^2)}$ time with $O(n)$ space~\cite{Chan03}.

\paragraph{Our results.}
We present a method to solve the constant-diameter subproblems that appear when the shifting strategy is used. We call it the \emph{shifting coresets method} (Section~\ref{s:ourmethod}). By incorporating this method into the shifting strategy idea, we obtain linear-time approximation algorithms for problems on unit disk graphs and other geometric intersection graphs.
The method is based on approximating the input set of geometric objects, which can be arbitrarily dense, by a \emph{sparse} set of objects, that is, a set of objects such that any square of constant size contains at most a constant number of objects.

To obtain efficient algorithms through the application of our method, one needs to investigate the fundamental question of how well a sparse set---generated using only local information---can approximate a denser set for each considered problem.
Thus, although the algorithms in this text share the same basic strategy, their analyses differ significantly. For example, the WIS analyses apply advanced graph-coloring theorems, whereas the DS analysis applies geometric packing arguments.

\begin{table}
\begin{center}
\begin{small}

\begin{tabular}{l|ll|ll} 
previous / new results &WIS  & &DS & \\
\hline
\rule{0pt}{3ex}previous ratio in $O(n\, \textrm{polylog}(n))$ time &$5$ & \cite{heuristics} &$4{.}889$ & \cite{linear-tcs} \\
our approximation ratio in $O(n)$ time &$4+\eps$ & Sec. \ref{s:WIS} &$4+\eps$ & Sec. \ref{s:DS}\\
previous time for a ($4+\eps$)-approximation &$O(n^2)$ & \cite{mwis-ptas} &$O(n^6 \log n)$\hspace{-.2cm} & \cite{hex-journal} 
\end{tabular}

\end{small}
\end{center}
\caption{\label{t:table}Comparison of new and previous approximation algorithms for UDGs.}
\end{table}

By using the shifting coresets method, we obtain linear-time $(4+\eps)$-{ap\-prox\-i\-ma\-tion} algorithms on UDGs for the WIS (which is tackled first, in Section~\ref{s:WIS}, owing to its greater simplicity) and for the DS (Section~\ref{s:DS}). 
The proposed algorithms provide significant improvements when compared to existing linear- and near-linear-time algorithms for these problems (see Table~\ref{t:table}).
The text also includes (Section~\ref{s:VC}) an alternative linear-time $(1+\eps)$-{ap\-prox\-i\-ma\-tion} obtained for the VC. We reiterate that in that section we have independently derived a previous result from Marx, for the sake of illustrating an indirect application of our method~\cite{marx05}. 
Finally, our method is applied to obtain a linear-time $(6+\eps)$-approximation algorithm for the WIS on the class of intersection graphs of axis-aligned rectangles with side lengths in $[1, \lambda]$, for an arbitrary constant $\lambda \geq 1$.

In Section~\ref{s:conclusion}, some open problems and lower bounds to the approximation ratios of the presented algorithms are discussed.

\section{The shifting coresets method}\label{s:ourmethod}

The shifting strategy is the core of most of the existing geometric PTASs for problems on geometric intersection graphs~\cite{shifting}.
Generally, the shifting strategy reduces the original problem with $n$ objects to a set of subproblems whose inputs have constant diameter and the sum of the input sizes is $O(n)$. Such a reduction is based on partitioning the objects according to a number of iteratively shifted grids and takes $O(n)$ time (by using the floor function and constant-time hashing). 
Exploiting the inputs' constant diameter, each subproblem is solved exactly in polynomial time. The solutions to the subproblems are then combined  appropriately (normally in $O(n)$ time) to yield feasible solutions to the original problem, the best of which is returned. The high complexities of these geometric PTASs are due to the exact algorithms that are employed to solve each subproblem.

The shifting coresets method we introduce is based on the shifting strategy. However, it presents a crucial new aspect. Rather than obtaining exact, costly solutions for the subproblems, it solves each subproblem \emph{approximately}. To do that, it employs the coresets paradigm, which consists in considering only a constant-size subset of the input~\cite{coresets}.
~
Let $\diam(P)$ denote the maximum distance between two points contained inside objects of a set of objects $P$. For a problem whose input is a set $P$ of $n$ objects and a parameter $\eps > 0$, the method can be briefly described as follows. 
\begin{enumerate}
\item Apply the shifting strategy to construct a set of $r$ subproblems with inputs $P_1,\ldots,P_r$ such that $\sum_{i=1}^r |P_i| = O(n)$ and for all $i$, $\diam(P_i)$ is a constant that depends on $\eps$.
\item For each subproblem instance $P_i$, obtain a coreset $Q_i \subseteq P_i$ with $|Q_i| = O(1)$, such that the optimal solution for instance $Q_i$ is an $\alpha$-{ap\-prox\-i\-ma\-tion} to the optimal solution for instance $P_i$. 
\item Solve the problem exactly for each $Q_i$.
\item Combine the solutions into an $(\alpha + \eps)$-{ap\-prox\-i\-ma\-tion} for the original problem.
\end{enumerate}

Coresets for different problems must be devised on a case-by-case basis.
For the WIS on UDGs, we create a grid with cells of diameter $0{.}29$ and consider only one disk center of maximum weight inside each cell. 
For the DS on UDGs, we create a grid with cells of diameter $0{.}24$ and consider only the (at most four) disk centers, inside each cell, with minimum or maximum coordinate in either dimension (breaking ties arbitrarily). 
We solve the VC on UDGs by breaking each subproblem into two cases. In the first one, the number of input disks is already bounded by a constant, therefore already a coreset. In the second one, we use the same coreset as in the WIS. Finally, we solve the WIS on a class of rectangle intersection graphs by representing each rectangle as a $4$-dimensional point and using a $4$-dimensional grid with cells of side $0{.}1$. Our coreset is then formed by the rectangle of maximum weight inside each $4$-dimensional cell.

Assuming a real-RAM model of computation with floor function and constant-time hashing, it is possible to partition the input points into grid cells efficiently, yielding an overall $O(n)$ running time for the algorithms~\cite{fixed-radius}. Without these operations, the running time becomes $O(n \log n)$.
We also assume that $\eps$ is constant. Otherwise, the running time becomes $2^{O(1/\eps^2)}n$ for the WIS and the DS on UDGs, since their corresponding coresets contain $O(1/\eps^2)$ points.\footnote{The dependency on $\eps$ may be conceivably improved to $2^{O(1/\eps)}$---for instance, by replacing brute-force search with a $2^{O(\sqrt{n})}$-time exact algorithm via separators. Such a dependency is not too important, though, since the approximation factor does not approach $1$. We have therefore opted for the simplest approach.} 
For the VC, the running time becomes $2^{O(1/\eps^3)}n$. As for the WIS on axis-aligned rectangles, if one regards $\eps$ and $\lambda$ (the maximum allowed side length) as asymptotic variables, then the running time becomes $2^{O(\lambda^2/\eps^2)}n$.

\smallskip

\paragraph{Definitions.}
A grid is said to be \emph{rooted at} a point $(x,y)$ if there is a grid cell with corner at $(x,y)$.
Given a grid cell $C$ and a real parameter $d>0$, the square region $C_d^-\subset C$, called the \emph{$d$-contraction} of $C$, is formed by removing from $C$ the points within distance at most $d$ from the boundary of $C$. Analogously, the square region~$C_d^+ \supset C$, called the \emph{expansion} of $C$, is formed by $C$ and all points within $L_\infty$ distance at most $d$ from $C$. Finally, we denote the collection of all $d$-expansions of cells of a grid rooted at a point $p$ by $\CC_d^+(p)$.

\section{WIS on UDGs} \label{s:WIS}

In this section, we show how to apply the shifting coresets method to obtain a linear-time $(4+\eps)$-{ap\-prox\-i\-ma\-tion} to the WIS on unit disk graphs. We start by presenting a $4$-{ap\-prox\-i\-ma\-tion} for point sets of constant diameter, and then we use the shifting strategy to obtain the desired $(4+\eps)$-{ap\-prox\-i\-ma\-tion}.

Given a point $p$ and a set $S$ of points, let $w(p)$ denote the weight of $p$, and let $w(S)=\sum_{p \in S}w(p)$. Two or more points are considered \emph{independent} if their minimum distance is strictly greater than $2$.

\begin{thm} \label{t:constdiam-WIS}
Given a set $P$ of $n$ points with real weights as input, with $\diam(P) = O(1)$, the WIS for the corresponding UDG can be $4$-approximated in $O(n)$ time on the real-RAM.
\end{thm}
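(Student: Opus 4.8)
The plan is to instantiate the shifting-coresets template at the level of a single constant-diameter subproblem. Since $\diam(P)=O(1)$, I would overlay a grid whose cells have diameter $0{.}29$; because $P$ lies in a region of constant diameter, only $O(1)$ cells are nonempty. The coreset $Q\subseteq P$ keeps, from each nonempty cell, a single point of maximum weight, so $|Q|=O(1)$ and $Q$ is built in $O(n)$ time using the floor function and constant-time hashing. Since $Q\subseteq P$, every independent set of $Q$ is independent in $P$, so the optimum $\OPT(Q)$—found by brute force over the $O(1)$ points of $Q$, in $O(1)$ time—is a feasible solution for $P$. It remains only to prove the one inequality that makes the returned set a $4$-approximation, namely $\OPT(Q)\ge\tfrac14\,\OPT(P)$.

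To establish that inequality I would transform an optimal solution. Any two points inside a common cell are within distance $0{.}29<2$, hence adjacent, so an optimal independent set of $P$ meets each cell at most once; replacing each of its points by the maximum-weight representative of the same cell yields a set $S'\subseteq Q$ with $w(S')\ge\OPT(P)$ and exactly one point per occupied cell. Moreover $S'$ is \emph{well separated}: if $p,p'$ are the original optimal points of two cells and $q,q'$ their representatives, the triangle inequality gives $d(q,q')\ge d(p,p')-0{.}29-0{.}29>2-0{.}58=1{.}42$. I would then form the conflict graph $H$ on $S'$, joining two points whenever they are at distance at most $2$ (that is, adjacent in the UDG), and claim that $H$ is planar in its straight-line embedding. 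Granting planarity, the Four Color Theorem partitions $S'$ into four independent sets, the heaviest of which is an independent set of $Q$ of weight at least $\tfrac14\,w(S')\ge\tfrac14\,\OPT(P)$, which is exactly the bound needed.

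The heart of the argument—and the step I expect to be the main obstacle—is the planarity claim, which is purely geometric. Suppose two conflict edges $qq'$ and $rr'$ crossed; then $q,q',r,r'$ would be the vertices of a convex quadrilateral whose two diagonals are these edges, so both diagonals have length at most $2$, while all four sides exceed $1{.}42$ by well separation. A short computation (whose extremal configuration is the square with diagonals equal to $2$ and side $\sqrt2$) shows that a convex quadrilateral with both diagonals at most $2$ cannot have all four sides longer than $\sqrt2$; since $1{.}42>\sqrt2\approx1{.}4142$, no such crossing can occur and $H$ is planar. This is precisely where the cell diameter $0{.}29$ is forced: one needs $2-2\delta>\sqrt2$, i.e.\ $\delta<1-\tfrac{1}{\sqrt2}\approx0{.}293$, and $0{.}29$ is a convenient value just below this threshold. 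Once planarity is in hand the Four Color Theorem supplies the factor $4$, completing the $4$-approximation for the constant-diameter case; feeding this per-subproblem guarantee into the shifting strategy is what will later yield the global $(4+\eps)$-approximation.
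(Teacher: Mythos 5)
Your algorithm and analysis coincide with the paper's: the same grid of cell diameter $0{.}29$, the same max-weight-point-per-cell coreset $Q$, brute force on $Q$, the mapping of an independent set to its cell representatives, the separation bound $2-2\cdot 0{.}29 = 1{.}42 > \sqrt 2$, and planarity plus the Four-Color Theorem; you even identify the same threshold $\gamma < 1 - \tfrac{1}{\sqrt 2}$ that the paper writes as $(2-\sqrt2)/2$. The one point of divergence is the step you yourself flag as the crux: establishing planarity. You reduce it to the claim that a convex quadrilateral with both diagonals of length at most $2$ must have a side of length at most $\sqrt 2$, and then dispose of that claim with ``a short computation'' whose extremal case is the square. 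The claim is true and the reduction is valid (the four endpoints of two properly crossing segments are in convex position with those segments as diagonals), so nothing in your argument is false --- but the computation is not as short as you suggest, and as written this is a gap. If you set up the law of cosines at the crossing point, with half-diagonal lengths $a,c$ and $b,d$ and acute crossing angle $\theta$, the easy estimate $\mathrm{side}^2 \le \min(a,c)^2 + \min(b,d)^2 \le 2$ only applies when the two shorter half-diagonals subtend the acute angle; when they subtend the obtuse angle the naive bounds do not close, and some additional idea is required.

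The missing idea is exactly the paper's argument, which proves your quadrilateral claim cleanly: if a point is at distance greater than $\sqrt 2$ from both endpoints of a segment of length at most $2$, then it is at distance greater than $1$ from every point of that segment (split at the foot of the perpendicular; one of the two pieces of the segment has length at most $1$, and Pythagoras gives $\mathrm{dist}^2 > 2 - 1 = 1$). Hence, if two edges $qq'$ and $rr'$ of the conflict graph crossed at a point $x$, then $\|rx\| > 1$ and $\|r'x\| > 1$, so $\|rr'\| = \|rx\| + \|xr'\| > 2$, contradicting that $rr'$ is an edge. Substituting this two-line argument for the asserted computation makes your proof complete. One last nitpick: the theorem claims $O(n)$ time on the plain real-RAM, and since $\diam(P)=O(1)$ there are only $O(1)$ nonempty cells, so points can be bucketed by comparing coordinates against the $O(1)$ cell boundaries; the floor function and constant-time hashing you invoke are unnecessary here (the paper notes this explicitly), and avoiding them is what lets the statement hold in the weaker model.
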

\begin{proof}
Our algorithm proceeds as follows. 
First, find the points of $P$ with minimum or maximum coordinates in either dimension. 
That defines a bounding box of constant size for $P$. Within this bounding box, create a grid with cells of diameter $\gamma = 0{.}29$ and side $\gamma / \sqrt{2}$ (any value $\gamma < (2-\sqrt{2})/2$ suffices). Note that the number of grid cells is constant, and therefore the points of $P$ can be partitioned among the grid cells in $O(n)$ time (even without using the floor function or hashing). Then, build the subset $Q \subseteq P$ as follows. For each non-empty grid cell $C$, add to $Q$ a point of maximum weight in $P \cap C$. Afterwards, determine the maximum-weight independent set $I^*$ of $Q$. Since $|Q| = O(1)$, this can be done in constant time. Return the solution $I^*$.

Next, we show that $I^*$ is indeed a $4$-{ap\-prox\-i\-ma\-tion}. We argue that, given an independent set $I \subseteq P$, there is an independent set $I' \subseteq Q$ with $4~w(I') \geq w(I)$. Given a point $p \in P$, let $q(p)$ denote the point from $Q$ that is contained in the same grid cell as $p$. Consider the set $S = \{q(p) : p \in I \}$. Note that $w(q(p)) \geq w(p)$ and $w(S) \geq w(I)$. The set $S$ may not be independent, but since $I$ is independent, the minimum distance in $S$ is at least $2-2\gamma = 1{.}42 > \sqrt{2}$. We claim that the unit disk graph formed by $S$ is a planar graph. To prove the claim, we show that a planar drawing can be obtained by connecting the points of $S$ within distance at most $2$ by straight line segments. Given a pair of points $p_1,p_2$ with distance $\|p_1p_2\| \leq 2$, the Pythagorean Theorem shows that a unit disk centered within distance greater than $\sqrt{2}$ from both $p_1$ and $p_2$ cannot intersect the segment $p_1p_2$. 
This implies no two edges in the drawing can cross.
By the Four-Color Theorem~\cite{fourcolor}, $S$ admits a partition into four independent sets $S_1,\ldots,S_4$. The set $I'$ of maximum weight among $S_1,\ldots,S_4$ must have weight at least $w(I)/4$.\footnote{Note that the Four-Color Theorem is only used in the argument. No coloring is ever computed by the algorithm.}

Since $I^*$ is the maximum-weight independent set of $Q$, it follows that $I^*$ is a 4-{ap\-prox\-i\-ma\-tion} for the WIS. 
\end{proof}

The following theorem uses the shifting strategy to obtain a $(4+\eps)$-{ap\-prox\-i\-ma\-tion} for point sets of arbitrary diameter. The proof borrows ideas from {Hunt III} \emph{et al}~\cite{ptas-geometric}. We present them in a different manner, though, including details about an efficient implementation of the strategy.

\begin{thm} \label{t:WIS}
Given a set $P$ of $n$ points in the plane as input, the WIS for the corresponding UDG can be $(4+\eps)$-approximated in $O(n)$ time on the real-RAM with constant-time hashing and the floor function. Without these operations, it can be done in $O(n \log n)$ time.
\end{thm}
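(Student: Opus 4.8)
The plan is to lift the constant-diameter algorithm of Theorem~\ref{t:constdiam-WIS} to an arbitrary point set by the shifting strategy, incurring only a multiplicative loss of $(1+O(1/k))$ for a shift parameter $k$ that I will take to be $\Theta(1/\eps)$. Everything geometric beyond the decoupling of cells is then outsourced to Theorem~\ref{t:constdiam-WIS}.

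First I would fix an integer $k=\Theta(1/\eps)$ and set the grid side to $\ell=2k=O(1)$. For each shift $(a,b)\in\{0,\dots,k-1\}^2$, I place vertical grid lines at $x=2a+j\ell$ and horizontal lines at $y=2b+j\ell$ (for $j\in\mathbb{Z}$), which cut the plane into $\ell\times\ell$ cells. I call a point \emph{interior} for this shift if it lies at distance strictly greater than $1$ from every grid line, and I discard every non-interior point (these occupy the width-$2$ bands straddling the lines). Within each non-empty cell I run the algorithm of Theorem~\ref{t:constdiam-WIS} on the interior points it contains; this is legitimate because each cell has $\diam = \ell\sqrt{2} = O(1)$. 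I output the union $J_{a,b}$ of the per-cell independent sets, and finally return the heaviest $J_{a,b}$ over the $k^2$ shifts. The point of the width-$2$ bands is decoupling: two interior points lying in two \emph{distinct} cells are separated by a grid line from which each is at distance greater than $1$, hence are more than $2$ apart, so their unit disks are disjoint. Consequently the union over cells of per-cell independent sets is a genuine independent set of the UDG on $P$.

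For the approximation bound I would use a counting argument over shifts. Let $I$ be an optimal independent set of $P$, and for shift $(a,b)$ let $I_{a,b}=I\cap\{\text{interior points}\}$. Each point of $P$ lands in the discarded $x$-band for exactly one value of $a$ and in the discarded $y$-band for exactly one value of $b$, so it is discarded in at most $2k$ of the $k^2$ shifts; hence $\sum_{a,b} w(I\setminus I_{a,b})\le 2k\,w(I)$, and some shift satisfies $w(I_{a,b})\ge(1-2/k)\,w(I)$. For that shift, the restriction of $I_{a,b}$ to each cell is an independent set of the cell's interior points, so Theorem~\ref{t:constdiam-WIS} produces a per-cell solution of weight at least one fourth of it; summing over cells, $w(J_{a,b})\ge\tfrac14 w(I_{a,b})\ge\tfrac14(1-2/k)\,w(I)$. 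Since the algorithm returns the best shift, the returned weight is at least $\tfrac14(1-2/k)\,\OPT$, giving approximation ratio $\tfrac{4}{1-2/k}$, which is at most $4+\eps$ once $k\ge 8/\eps+2$, i.e.\ for $k=O(1/\eps)$.

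For the running time, bucketing the points into the non-empty cells of one shift takes $O(n)$ time with the floor function and constant-time hashing (or $O(n\log n)$ by sorting); the per-cell cost of Theorem~\ref{t:constdiam-WIS} is linear in the cell's size, so it sums to $O(n)$, and there are only $k^2=O(1)$ shifts, yielding $O(n)$ overall. I expect the main obstacle to be the shifting bookkeeping rather than any new geometry: one must choose the band width and shift step so that discarding the bands simultaneously (i) truly decouples the cells, which forces the width-$2$ condition, and (ii) costs, in the best of the $k^2$ grids, only a $2/k$ fraction of $\OPT$'s weight, where the factor $2$ comes from the two-coordinate double counting. Once these two quantities are reconciled with the target ratio $\tfrac{4}{1-2/k}\le 4+\eps$, the remainder is the already-proved Theorem~\ref{t:constdiam-WIS} together with routine bucketing.
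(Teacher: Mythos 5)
Your proposal follows essentially the same route as the paper: shift a grid of side $2k$ over $k^2$ offsets, discard a width-$2$ band around the grid lines so that per-cell solutions decouple into a global independent set, run the constant-diameter $4$-approximation of Theorem~\ref{t:constdiam-WIS} in each cell, and return the best shift. Your counting argument ($\sum_{a,b} w(I\setminus I_{a,b})\le 2k\,w(I)$, hence some shift loses at most a $2/k$ fraction) is just the deterministic form of the paper's probabilistic averaging over a random shift, and your decoupling bands play the role of the paper's cell contractions $C^-$ (the paper conservatively uses distance $2$ from the cell boundary where your distance-$1$ bands already suffice). Two remarks. First, a trivial quibble: a point at distance exactly $1$ from a grid line position is discarded for \emph{two} consecutive values of $a$, so ``exactly one value of $a$'' should be ``at most two,'' giving a loss fraction $4/k$ instead of $2/k$; this only changes the constant in $k=O(1/\eps)$.

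Second, a genuine gap in the secondary claim of the theorem: your parenthetical ``or $O(n\log n)$ by sorting'' does not work on a real-RAM \emph{without} the floor function. Sorting by coordinates does not tell you which grid cell a point lies in: deciding whether a grid line of the form $x=2a+2kj$, $j\in\mathbb{Z}$, separates two points is itself a floor-type computation, and the number of candidate lines is unbounded, so comparisons alone cannot bucket the points. The paper resolves this by first computing the connected components of the UDG (via the Delaunay triangulation, whose short edges realize the components), then exploiting that each component has geometric diameter $O(n)$ so that, after anchoring the grid within a component and sorting by $x$ and then by $y$ inside columns, the cell assignment can be done by merging against an explicit, linearly bounded list of grid lines. (Correctness survives per-component anchoring because the WIS decomposes additively over connected components.) Since the $O(n\log n)$ model-without-floor claim is part of the statement you are proving, this extra idea is needed; the rest of your argument is sound.
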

\begin{proof}
Let $k$ be the smallest integer such that 
\begin{equation}\label{eq:k1}
\left(\frac{k-2}{k}\right)^2 \geq \frac{4}{4+\eps}.
\end{equation} 

The algorithm proceeds as follows. For $i,j$ from $0$ to $k-1$, create a grid with cells of side $2k$ rooted at $(2i,2j)$. For each cell $C$ in the grid, run the WIS $4$-{ap\-prox\-i\-ma\-tion} algorithm from Theorem~\ref{t:constdiam-WIS} with point set $P \cap C_2^-$ (the points in $P$ that belong to a $2$-contraction of $C$, see Fig.\ref{f:contraction}), obtaining a solution $I_{i,j}(C)$. Then, the independent set $I_{i,j}$ is constructed as the union of the independent sets $I_{i,j}(C)$ for all grid cells $C$. Return the maximum-weight set $I_{i,j}$ that is found, call it $I^*$.

 \begin{figure}[t]
  \centering
  \includegraphics[scale=.45]{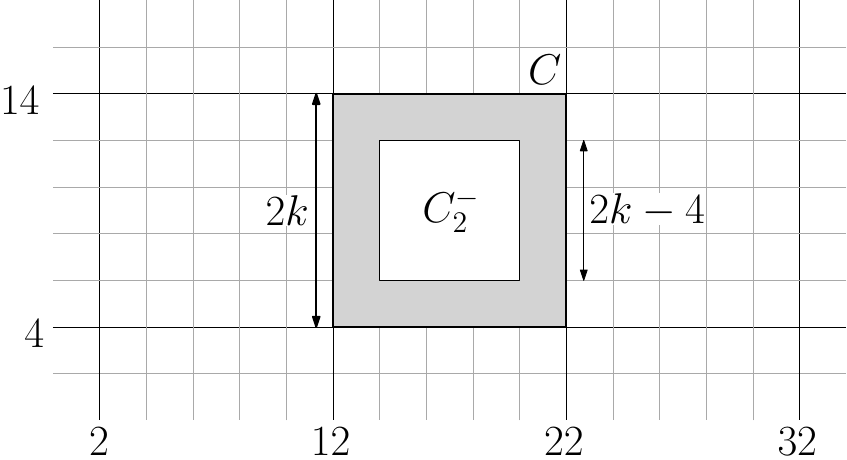}
  \caption{Grid rooted at $(2,4)$ with cells of side $2k=10$ and the $2$-contraction of a cell.}
  \label{f:contraction}
 \end{figure}

To implement the algorithm efficiently, create a subgrid of subcells of side $2$, assigning each point to the subcell that contains it. In order to partition the $n$ points into subcells efficiently, one should use the floor function and constant-time hashing, taking $O(n)$ time. If these operations are not available, it is possible to determine the connected components of the graph (using the Delaunay triangulation, for example) and to partition the points of each component into subcells (through sorting by $x$-coordinate, separating into columns, and sorting inside each column by $y$-coordinate). The non-empty subcells are stored in a balanced binary search tree. This process takes $O(n \log n)$ time due to sorting, Delaunay triangulation, and binary search tree operations. Given the partitioning of the point set into subcells, each input to the WIS algorithm can be constructed as the union of a constant number of subcells. Finally, the total size of the constant-diameter WIS instances is $O(n)$, since each point from the original point sets 
%appears in a constant number 
appears in a constant number of such instances, namely $(k-2)^2$.

To prove that the returned solution $I^*$ is indeed a $(4+\eps)$-{ap\-prox\-i\-ma\-tion}, we use a probabilistic argument. Let $i,j$ be picked uniformly at random from $0,\ldots,k-1$ and let $\OPT$ denote the optimal solution. For every cell $C$, we have
$$w(I_{i,j}(C)) \geq \frac{1}{4}~w(\OPT \cap C_2^-).$$
Consequently, by summing over all grid cells,
$$w(I_{i,j}) = \sum_C w(I_{i,j}(C)) \geq \frac{1}{4} \sum_C w(\OPT \cap C_2^-)
= \frac{1}{4} \sum_{p \in \OPT}\rho(p)~w(p),$$
where $\rho(p)$ denotes the probability that a given point $p$ is contained in some contracted cell. Because such probability is the same for all points, $\E[w(I_{i,j})]$ is bounded by 
$$
\E[w(I_{i,j})] \geq \frac{1}{4}~\rho(p)  \sum_{p \in \OPT}w(p)
                 = \frac{1}{4}~\rho(p)~w(\OPT).
$$
Note that, for all $p \in P$,
$\rho(p)$ corresponds to the ratio between the areas of $C_2^-$ and $C$, namely 
$$
\rho(p) = \frac{\mathrm{area}(C_2^-)}{\mathrm{area}(C)} = \left(\frac{k-2}{k}\right)^2.
$$
Therefore, by using inequality~(\ref{eq:k1}), we obtain
$$
\E[w(I_{i,j})] \geq \frac{1}{4}\left(\frac{k-2}{k}\right)^2w(\OPT) \geq \frac{1}{4+\eps}~w(\OPT).
$$

Since $I^*$ has maximum weight among the independent sets $I_{i,j}$, it follows that $w(I^*)$ is at least as large as their average weight. Therefore, 
$I^*$ 
satisfies 
$$
w(I^*) \geq \E[w(I_{i,j})] \geq \frac{1}{4+\eps}~w(\OPT),
$$ 
closing the proof. 
\end{proof}

\section{DS for UDGs} \label{s:DS}
In this section, we show how to apply the shifting coresets method to obtain a linear-time $(4+\eps)$-{ap\-prox\-i\-ma\-tion} to the DS (in fact, a generalization of it) on unit disk graphs. We start by presenting a $4$-{ap\-prox\-i\-ma\-tion} for point sets of constant diameter, and then we use the shifting strategy to obtain the desired $(4+\eps)$-{ap\-prox\-i\-ma\-tion}. A point $p$ dominates a point $q$ if $\|pq\| \leq 2$.
Given two sets of points $D$ and $P'$, the set $D$ is a \emph{$P'$-dominating set} if every point in $P'$ is dominated by some point in $D$.

We now define a more general version of the DS, the \emph{minimum partial dominating set problem} (PDS). Such a generalization is necessary to properly apply the shifting strategy. In the PDS, given a set $P$ of $n$ points and also a subset $P' \subseteq P$, the goal is to find the smallest $P'$-dominating subset $D \subseteq P$.

In order to analyze our algorithm, we prove a geometric lemma that shows that the difference between a unit circle and two unit disks that are sufficiently close to it and form a sufficiently big angle consists of one or two ``small'' arcs. Given a point $p$, let $\ud{p}$ denote the unit disk centered at $p$, and $\partial{\ud{p}}$ denote its boundary circle.

\begin{lem} \label{l:1or2arcs}
Given $\delta > 0$ and three points $p,q_1,q_2 \in \mathbb{R}^2$ with (i)~$\|pq_1\| \leq \delta$, (ii)~$\|pq_2\|	 \leq \delta$, and (iii)~the smallest angle $\angle q_1pq_2$ is greater than or equal to $\pi/2$, it follows that:
\begin{enumerate}
 \item[(1)] the portion $T = (\partial\ud{p}) \setminus (\ud{q_1} \cup \ud{q_2})$ of the boundary $\partial\ud{p}$ consists of one or two circular arcs;
 
 \item[(2)] if $T$ consists of one circular arc, then the arc length is less than or equal to $\pi/2 + 2\arcsin(\delta/2)$; and % \leq (1+\delta)\pi/2
 
 \item[(3)] if $T$ consists of two circular arcs, then each arc length is less than $2 \arcsin \delta$. % \leq \delta \pi 
\end{enumerate}
\end{lem}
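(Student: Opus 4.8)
The plan is to reduce everything to a one-dimensional bookkeeping problem on the angular parameter of the circle $\partial\ud{p}$. I would place $p$ at the origin and write a point of $\partial\ud{p}$ as $x(\theta)=(\cos\theta,\sin\theta)$. Writing $d_i=\|pq_i\|$ and letting $\alpha_i$ be the angular direction of $q_i$, the law of cosines gives $\|q_i\,x(\theta)\|^2 = 1 + d_i^2 - 2d_i\cos(\theta-\alpha_i)$, so $x(\theta)$ lies in $\ud{q_i}$ exactly when $\cos(\theta-\alpha_i)\ge d_i/2$. Hence each disk $\ud{q_i}$ removes from $\partial\ud{p}$ a single arc $A_i$ centered at direction $\alpha_i$ with angular half-width $\beta_i=\arccos(d_i/2)=\pi/2-\arcsin(d_i/2)$. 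Since the angle $\angle q_1pq_2$ is assumed to be defined, we have $d_1,d_2>0$, so $0\le\beta_i<\pi/2$; and hypothesis~(iii) says the angular separation $\phi=\angle q_1pq_2$ satisfies $\pi/2\le\phi\le\pi$. I would then describe $T=\partial\ud{p}\setminus(A_1\cup A_2)$ through the two gaps between the removed arcs, which (measuring arc length as angular measure, since the radius is $1$) are
$$g_1=\phi-\beta_1-\beta_2 \qquad\text{and}\qquad g_2=2\pi-\phi-\beta_1-\beta_2.$$

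First I would prove~(1). Because $\phi\le\pi$ and $\beta_i<\pi/2$, the gap $g_2=2\pi-\phi-\beta_1-\beta_2>0$, so $T$ is nonempty. Moreover $\phi\ge\pi/2>|\beta_1-\beta_2|$, so neither removed arc contains the other; thus $A_1\cup A_2$ is a union of at most two arcs whose complement $T$ has at most two connected components. Consequently $T$ consists of exactly one or two arcs, according to whether $g_1\le 0$ (overlap on the $g_1$ side, $T$ equal to the single arc $g_2$) or $g_1>0$ (the two arcs are separated on both sides, $T$ equal to the two arcs $g_1,g_2$). For~(2), the single-arc case is $g_1\le0$, and substituting $\beta_i=\pi/2-\arcsin(d_i/2)$ yields $g_2=(\pi-\phi)+\arcsin(d_1/2)+\arcsin(d_2/2)$; using $\phi\ge\pi/2$ and $d_i\le\delta$ gives $g_2\le\pi/2+2\arcsin(\delta/2)$, as claimed.

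For~(3), the two-arc case is $g_1>0$. Here $g_1=(\phi-\pi)+\arcsin(d_1/2)+\arcsin(d_2/2)\le\arcsin(d_1/2)+\arcsin(d_2/2)$ since $\phi\le\pi$, while the condition $g_1>0$ rewrites as $\pi-\phi<\arcsin(d_1/2)+\arcsin(d_2/2)$ and hence $g_2=(\pi-\phi)+\arcsin(d_1/2)+\arcsin(d_2/2)<2\bigl(\arcsin(d_1/2)+\arcsin(d_2/2)\bigr)$. Both right-hand sides are at most $4\arcsin(\delta/2)$. The one genuinely nonroutine step, which I expect to be the main obstacle, is turning these into the stated clean bound $2\arcsin\delta$: since $\arcsin$ is convex on $[0,1]$ and vanishes at $0$, it satisfies $2\arcsin(\delta/2)\le\arcsin\delta$, so $4\arcsin(\delta/2)\le 2\arcsin\delta$ and each arc has length strictly less than $2\arcsin\delta$. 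I would also remark that the degenerate case $q_i=p$ is excluded by the angle hypothesis, and that the open/closed convention for the disks affects only the (measure-zero) endpoints and not the arc lengths.
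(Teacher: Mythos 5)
Your proof is correct, and it takes a genuinely different route from the paper's. The paper argues by extremal configurations: for statement (2) it asserts (without detailed justification) that $\|T\|$ is maximized as the angle $\angle q_1pq_2$ decreases to $\pi/2$, and then decomposes $T$ by rays in directions $q_1p$ and $q_2p$ into a central arc of measure $\pi/2$ plus two side arcs of measure $\arcsin(\delta/2)$ each; for statement (3) it passes to the limiting configuration in which the smaller arc $T_2$ degenerates to a point, observes that the rays from $q_1$ and $q_2$ to the endpoints of $T_1$ become parallel (so that the chord of $T_1$ has length $\|q_1q_2\| \le 2\delta$), and concludes $\|T_1\| < 2\arcsin\delta$; statement (1) is dismissed as clearly true. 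You instead compute everything in closed form: each disk $\ud{q_i}$ removes from $\partial\ud{p}$ an arc of half-width $\beta_i = \arccos(d_i/2) = \pi/2 - \arcsin(d_i/2)$ about the direction of $q_i$, and the two gaps have measures $g_1 = \phi - \beta_1 - \beta_2$ and $g_2 = (\pi-\phi) + \arcsin(d_1/2) + \arcsin(d_2/2)$. This buys several things: the paper's monotonicity and limiting claims become visible properties of these formulas ($g_2$ is decreasing in $\phi$ and increasing in $d_1,d_2$); statement (1) receives an actual argument ($g_2>0$ always, and neither removed arc contains the other because $\phi \ge \pi/2 > |\beta_1-\beta_2|$); and in case (3) you in fact obtain the sharper bound $4\arcsin(\delta/2)$ for both gaps, which you then correctly relax to the stated $2\arcsin\delta$ via convexity of $\arcsin$ on $[0,1]$ --- a step you rightly single out, since the two constants genuinely differ. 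What the paper's approach buys is brevity and a picture-level explanation of where each term in the bounds comes from, at the price of leaving the extremal and limit steps informal. Both proofs share the same implicit and harmless restrictions ($q_i \ne p$, and $\delta$ small enough that the $\arcsin$ expressions are defined), which you note explicitly and which are satisfied in the paper's application with $\delta = 0{.}24$.
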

\begin{proof}
Statement (1) is clearly true. We start by proving statement (2). The arc length $\|T\|$ is maximized as the angle $\angle q_1pq_2$ decreases while the distances $\|pq_1\|,\|pq_2\|$ are kept constant, therefore it suffices to consider the case in which $\angle q_1pq_2 = \pi / 2$. The arc $T$ centered at $p$ can be decomposed into three arcs by rays in directions $q_1p$ and $q_2p$, as shown in Fig.~\ref{f:1arc}(a). The central arc measures $\pi/2$, while each of the other two arcs measures $\arcsin(\delta/2)$, proving statement~(2).

\begin{figure}[t]
 \centering 
 \includegraphics[scale=.45]{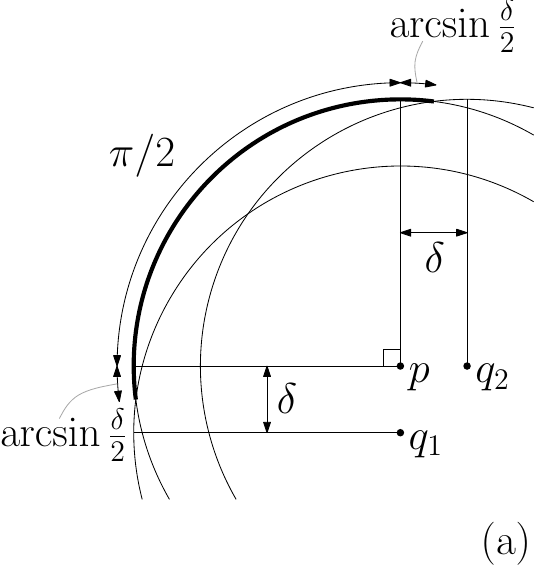} \hspace{.5cm}   
 \includegraphics[scale=.45]{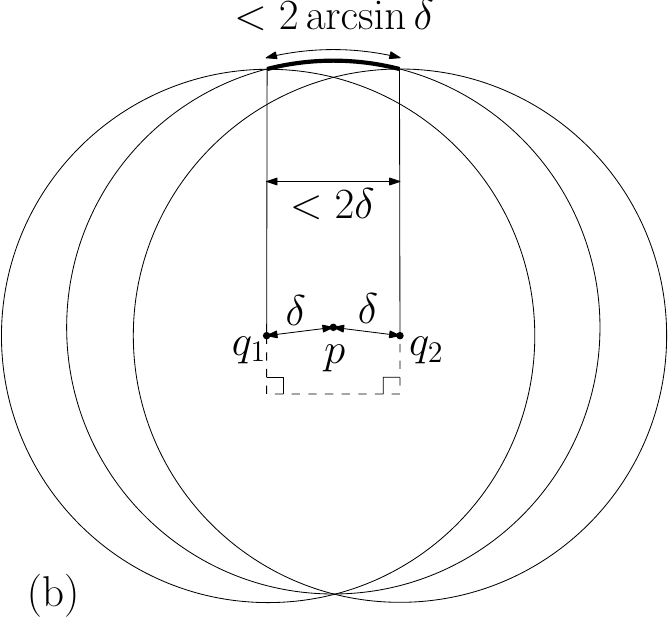}
 \caption{Proof of Lemma~\ref{l:1or2arcs}.}
 \label{f:1arc} \label{f:2arcs}
\end{figure}

Next, we prove statement (3). Let $T_1,T_2$ denote the two arcs that form $T$ with $\|T_1\| \geq \|T_2\|$. The arc length $\|T_1\|$ is maximized in the limit when $\|T_2\| = 0$, as shown in Fig.~\ref{f:2arcs}(b). The rays connecting $q_1$ and $q_2$ to the two extremes of $T_1$ are parallel, and therefore $\|T_1\| < 2 \arcsin \delta$. 
\end{proof}

We are now able to prove the following theorem, which presents our $4$-{ap\-prox\-i\-ma\-tion} algorithm for point sets of constant diameter.

\begin{thm} \label{t:constdiam-PDS}
Given two sets of points $P$ and $P'$ as input, with $P' \subseteq P$, $|P|=n$, and $\diam(P) = O(1)$, the PDS can be $4$-approximated in $O(n)$ time on the real-RAM.
\end{thm}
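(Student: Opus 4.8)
The plan is to mirror the proof of Theorem~\ref{t:constdiam-WIS}: overlay on a constant-size bounding box of $P$ a grid of cells of diameter $\gamma = 0{.}24$, and build a coreset $Q \subseteq P$ by placing, for each non-empty cell, the (at most four) points of $P$ attaining the minimum or maximum coordinate in either dimension. Since $\diam(P)=O(1)$ the number of cells is constant, so $P$ is partitioned and the extreme points extracted in $O(n)$ time, giving $|Q| = O(1)$. The algorithm then solves the PDS exactly among dominators drawn from $Q$: it enumerates the constantly many subsets of $Q$ and, for each, checks in $O(n)$ total whether it dominates $P'$, returning the smallest $P'$-dominating $D \subseteq Q$. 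This is feasible because every $u \in P'$ lies in a non-empty cell whose coreset points are all within distance $\gamma < 2$ of $u$ and hence dominate it, so a $P'$-dominating subset of $Q$ always exists.

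The heart of the argument is the approximation bound. Let $D^\ast$ be an optimal $P'$-dominating set over all of $P$. I would construct a $P'$-dominating set $D' \subseteq Q$ with $|D'| \le 4\,|D^\ast|$ by a charging scheme: to each $d \in D^\ast$ I assign at most four coreset points whose radius-$2$ disks together cover every point of $P'$ dominated by $d$. Since each $u \in P'$ is dominated by some $d \in D^\ast$, the union $D'$ of the assigned points dominates $P'$, and summing over $D^\ast$ gives $|D'| \le 4\,|D^\ast|$; as the exact step returns the best $P'$-dominating subset of $Q$, the output has size at most $|D'| \le 4\,\OPT$.

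The per-dominator charge is exactly where Lemma~\ref{l:1or2arcs} is used, and it is the main obstacle. If $d$ is itself a coreset point, the single point $d$ suffices. Otherwise $d$ is strictly interior to the point cloud of its cell, so the displacement vectors from $d$ to the leftmost, rightmost, bottommost and topmost points realize both signs in each coordinate; a short angular argument then shows these four vectors cannot all lie within a quarter-plane (a $\pi/2$-arc cannot straddle both a horizontal and a vertical sign change), so some pair $q_1,q_2$ of extreme points subtends an angle $\ge \pi/2$ at $d$. Rescaling the picture so that the domination radius becomes $1$ (whence $\|d q_i\| \le \gamma/2$), Lemma~\ref{l:1or2arcs} applies with $\delta = \gamma/2$ and shows that $(\partial\ud{d}) \setminus (\ud{q_1} \cup \ud{q_2})$ consists of one or two short arcs. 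Consequently the points of $P'$ dominated by $d$ but not by $q_1$ or $q_2$ lie in the one or two thin lunes subtended by these arcs.

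It remains to dominate the residual $P'$ points with at most two further coreset points, for a total of at most $2+2 = 4$. In the two-arc case each arc has length below $2\arcsin\delta$ (part~(3) of the lemma), so its lune has diameter comfortably below $2-\gamma$ and any coreset point in the cell of one of its $P'$ points dominates the whole lune; this handles each arc with one point. The delicate case is a single wide arc, whose length can approach $\pi/2 + 2\arcsin(\delta/2)$ (part~(2)): here I would take the coreset points in the cells of the two angularly extreme $P'$ points of the lune and verify, via the triangle inequality and the chord length of the arc at radius $2$, that every $P'$ point of the lune is within distance $2$ of one of them. This verification is precisely what constrains the cell size---$\gamma = 0{.}24$ is small enough to push the governing half-arc chord below $2-\gamma$, so two coreset points always cover a wide lune and the charge stays at four. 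Controlling the radial extent of the lune and confirming that its angularly extreme points sit near radius $2$, where the chord estimate is tight, is the step that will require the most care.
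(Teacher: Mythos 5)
Your proposal is correct and follows essentially the same route as the paper's proof: the same coreset of at most four extreme points per cell of diameter $0{.}24$, the same exhaustive search over subsets of $Q$, and the same charging argument that assigns to each optimal dominator at most four coreset points (itself if it lies in $Q$; otherwise a pair $q_1,q_2$ subtending an angle $\geq \pi/2$, plus up to two points covering the residual region via Lemma~\ref{l:1or2arcs} and a one-arc/two-arc case analysis). The only notable difference is that you apply the lemma after rescaling so that the domination radius is $1$ (hence $\delta=\gamma/2$, confining residual points to the thin annulus sector of radii $(2-\gamma,2]$), whereas the paper applies it to the radius-$1$ disks with $\delta=\gamma$ and separately argues $\|pv\|>1$ to confine residual points to a crown sector over radii $(1,2]$; both give the required bound (chord plus $\gamma$) strictly below $2$ for $\gamma=0{.}24$, yours in fact with more slack.
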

\begin{proof}
First, determine a bounding box of constant size for $P$, as in the algorithm for the WIS.
Within this bounding box, create a grid with cells of diameter $\gamma = 0{.}24$ and side $\gamma / \sqrt{2}$
(any positive $\gamma$ satisfying 
$$\gamma + \sqrt{8 - 8\cos\left(\frac{\frac{\pi}{2} + 2 \arcsin(\frac{\gamma}{2}) }{2}\right)} < 2$$
suffices).
Note that the number of grid cells is constant, and therefore the points of $P$ can be partitioned among the grid cells in $O(n)$ time (even without using the floor function or hashing). Then, build the subset $Q \subseteq P$ as follows. 
For each non-empty grid cell, add to $Q$ the (at most four) extreme points inside the cell, i.e., those presenting minimum or maximum coordinate in either dimension. Ties are broken arbitrarily.
Since there is a constant number of grid cells and $Q$ contains at most four points per cell, it follows that $|Q| = O(1)$. Now determine the smallest $P'$-dominating subset $D^* \subseteq Q$. To do that, examine the subsets of $Q$, from smallest to largest, verifying if all points of $P'$ are dominated, until the dominating set $D^*$ if found and returned as the approximate solution. Since $Q$ has a constant number of points, this procedure takes $O(n)$ time.

Now we show that the returned solution $D^*$ is indeed a $4$-{ap\-prox\-i\-ma\-tion}. 
We argue that, given a $P'$-dominating set $D \subseteq P$, there is a $P'$-dominating set $D' \subseteq Q$ with $|D'| \leq 4~|D|$. To build the set $D'$ from $D$, proceed as follows. For each point $p \in D$, if $p \in Q$, add $p$ to $D'$. Otherwise, since the set $Q$ contains 
points of extreme coordinates in both $x$ and $y$ axes,
in the cell of~$p$, there are two points $q_1,q_2 \in Q$ such that (i) $\|pq_1\| \leq \gamma$, (ii) $\|pq_2\| \leq \gamma$, and (iii)~the smallest angle $\angle q_1pq_2$ is at least $\pi/2$. Add these two points $q_1,q_2$ to~$D'$.

By Lemma~\ref{l:1or2arcs}, the portion $T = (\partial\ud{p}) \setminus (\ud{q_1} \cup \ud{q_2})$ of $\partial\ud{p}$ consists of one or two circular arcs. First consider the case in which $T$ consists of one circular arc.
Let $R$ be the set of points from $P'$ which are dominated by $p$, but not by $q_1$ or~$q_2$. If $R$ is empty, then no extra point needs to be added to $D'$. Otherwise, the line $\ell$ which contains $p$ and bisects $T$ separates $R$ into two (possibly empty) sets $R_1,R_2$. If $R_1 \neq \emptyset$, let $p_3$ be an arbitrary point of $R_1$. Since $Q$ contains a point in the same cell as $p_3$, there is a point $q_3$ with $\|p_3q_3\| \leq \gamma$. Add the point $q_3$ to $D'$. Analogously, if $R_2 \neq \emptyset$, let $p_4$ be an arbitrary point of $R_2$ and let $q_4 \in Q$ be a point with $\|p_4q_4\| \leq \gamma$. Add the point $q_4$ to $D'$.

We now show that the four points $q_1,q_2,q_3,q_4 \in Q$ dominate all points dominated by $p$. Consider a point $v$ that is dominated by $p$ but not by $q_1$ or $q_2$. The point $v$ must be inside the circular crown sector depicted in Fig.~\ref{f:crownsector}(a) and described as follows. Because $v$ is dominated by $p$, we have $\|pv\|\leq 2$. By Lemma~\ref{l:1or2arcs}, the arc length $\|T\| < 1{.}82$. Also, $\|pv\| > 1$, because otherwise the unit circles centered at $p$ and $v$ would intersect forming an arc of length at least $2\pi/3$, which is greater than $\|T\|$, in which case $v$ is dominated by $q_1$ or $q_2$. 
Finally, since $v$ is closer to $p$ than it is to $q_1$ or $q_2$, it follows that $v$ must be between the lines that connect $p$ to the endpoints of~$T$. This circular crown sector is bisected by the line~$\ell$. By the law of cosines, the diameter of each circular crown sector is $d = \sqrt{8 - 8\cos(\|T\|/2)} < 1{.}76$. Therefore, for any point $v$ inside the circular crown sector, the point $q_3$ (or $q_4$, analogously) that is within distance at most $\gamma$ from a point inside the same sector dominates $v$, as $\|vq_3\| \leq d + \gamma < 2$.

\begin{figure}[t]
 \centering
 \includegraphics[scale=.45]{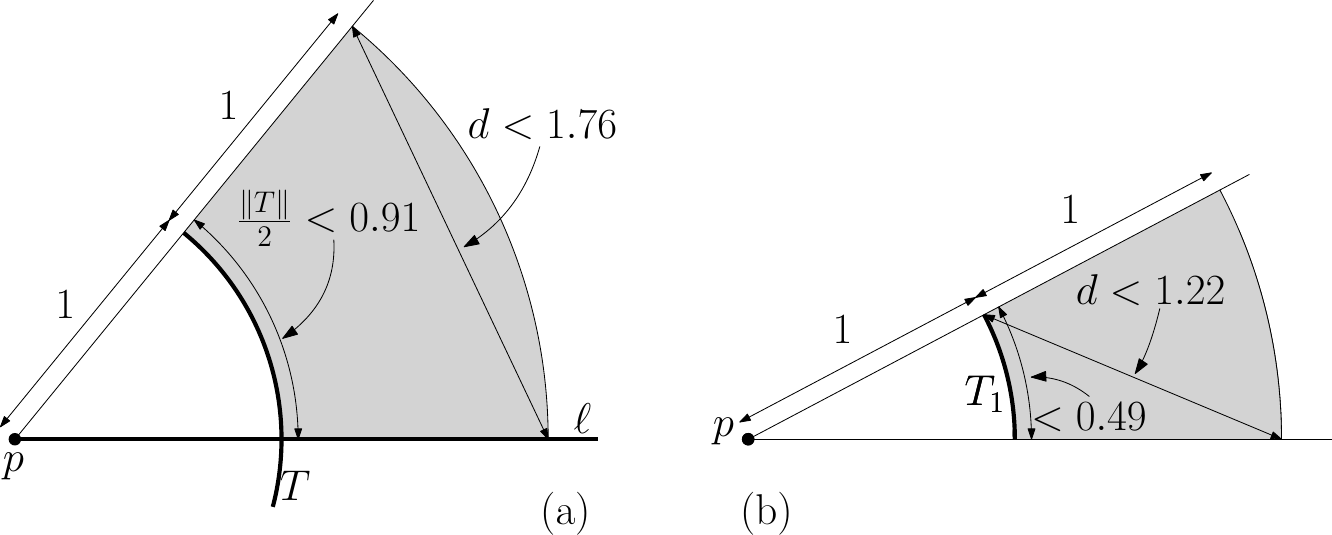}
 \caption{Proof of Theorem~\ref{t:constdiam-PDS}.}
 \label{f:crownsector}
\end{figure}

Finally, if $T$ consists of two circular arcs $T_1,T_2$ centered in $p$, then start by adding those same points $q_1, q_2$ to $D'$, as if $T$ consisted of only one arc. Then, if necessary, add new points $q_3,q_4$ to $D'$ as follows.
The points that are dominated by $p$ but not by $q_1$ or $q_2$ must be within distance $1$ of either $T_1$ or $T_2$. Let $p_3, p_4$ be arbitrary points that are within distance $1$ of $T_1$ or $T_2$, respectively, but are not dominated by $q_1$ or $q_2$. 
If such points $p_3,p_4$ exist, then there are two points $q_3,q_4$ in $Q$ that are within distance at most $\gamma$ from respectively $p_3, p_4$. By Lemma~\ref{l:1or2arcs}, the largest arc among $T_1,T_2$ measures at most $0{.}49$. The proof that all points dominated by $p$ are dominated by $q_1,q_2,q_3$, or $q_4$ is analogous to the case in which $T$ consists of a single arc, using the circular crown sector illustrated in Fig.~\ref{f:crownsector}(b).

Since $D^*$ is minimum among all subsets of $Q$ that are $P'$-dominating sets, $D^*$ is a 4-{ap\-prox\-i\-ma\-tion} for the PDS. 
\end{proof}

The following theorem uses the shifting strategy to obtain a $(4+\eps)$-{ap\-prox\-i\-ma\-tion} for point sets of arbitrary diameter.

\begin{thm} \label{t:PDS}
Given two sets of points $P$ and $P'$ as input, with $P' \subseteq P$ and $|P|=n$, the PDS can be $(4+\eps)$-approximated in $O(n)$ time on the real-RAM with constant-time hashing and the floor function. Without these operations, it can be done in $O(n \log n)$ time.
\end{thm}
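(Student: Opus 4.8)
The plan is to mirror the proof of Theorem~\ref{t:WIS}, but to replace the \emph{contraction} of each cell by an \emph{expansion}, since the dominating-set objective forces us to guarantee coverage across cell boundaries rather than independence. I would first fix $k$ to be the smallest integer satisfying
$$\left(\frac{k+2}{k}\right)^2 \le \frac{4+\eps}{4},$$
so that $k = O(1)$ for constant $\eps$, and work with grids of square cells of side $2k$ rooted at $(2i,2j)$ for $i,j\in\{0,\dots,k-1\}$. For a cell $C$, let $C^+$ denote its \emph{expansion}, the concentric square of side $2k+4$ obtained by enlarging $C$ by $2$ on each side. The algorithm runs, for every $i,j$ and every cell $C$ of the corresponding grid, the constant-diameter $4$-approximation of Theorem~\ref{t:constdiam-PDS} on the instance $\bigl(P\cap C^+,\,P'\cap C\bigr)$, obtaining a partial dominating set $D_{i,j}(C)$; it then forms $D_{i,j}=\bigcup_C D_{i,j}(C)$ and returns the smallest $D_{i,j}$, call it $D^*$. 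Note that $P'\cap C\subseteq P\cap C^+$, so each subproblem is a legitimate PDS instance of diameter $O(1)$.

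Feasibility comes first: since the cells of a fixed grid partition the plane, they partition $P'$, and $D_{i,j}(C)$ dominates $P'\cap C$; hence $D_{i,j}$ dominates all of $P'$ and is a subset of $P$. For the approximation guarantee I would use probabilistic averaging over the shifts $i,j$. The key observation is that $\OPT\cap C^+$ is itself a feasible solution to the subproblem of cell $C$: every point of $P'\cap C$ is dominated by some point of $\OPT$ at distance at most $2$, which therefore lies in the square $C$ expanded by the domination radius $2$, i.e.\ in $C^+$. Consequently the subproblem optimum is at most $|\OPT\cap C^+|$, and the $4$-approximation of Theorem~\ref{t:constdiam-PDS} yields $|D_{i,j}(C)|\le 4\,|\OPT\cap C^+|$. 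Summing over cells gives $|D_{i,j}|\le 4\sum_C|\OPT\cap C^+| = 4\sum_{p\in\OPT}N_{i,j}(p)$, where $N_{i,j}(p)$ counts the expanded cells of the $(i,j)$-grid that contain $p$.

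The crux is then to bound $\E_{i,j}[N_{i,j}(p)]$, which plays the role dual to the probability $\rho(p)$ in Theorem~\ref{t:WIS}: instead of \emph{undercounting} $\OPT$ (a point survives only if it lands in a contracted cell), the expansion \emph{overcounts} $\OPT$ (a point near a cell boundary is charged to several expanded cells). Because the grid is a Cartesian product of one-dimensional grids and the two shifts are independent, $N_{i,j}(p)$ factors as a product of one-dimensional coverage counts, so $\E[N_{i,j}(p)] = (\E[N^{1\mathrm{D}}])^2$. Each one-dimensional count has expectation equal to the average multiplicity of the expanded intervals, namely the ratio of the expanded length to the period,
$$\E[N^{1\mathrm{D}}] = \frac{2k+4}{2k} = \frac{k+2}{k},$$
so $\E[N_{i,j}(p)] = ((k+2)/k)^2$ for every generic $p$. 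Combining, $\E[|D_{i,j}|]\le 4((k+2)/k)^2\,|\OPT|\le(4+\eps)\,|\OPT|$ by the choice of $k$, and since $D^*$ has minimum cardinality among the $D_{i,j}$ it is no larger than this average, giving $|D^*|\le(4+\eps)\,|\OPT|$.

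The running time follows as in Theorem~\ref{t:WIS}: partition $P$ (hence $P'$) into subcells of side $2$ in $O(n)$ time with the floor function and hashing, or in $O(n\log n)$ without them; each instance $P\cap C^+$ is the union of a constant number $(k+2)^2$ of subcells, each point lies in only a constant number of expanded cells per grid, and there are $k^2=O(1)$ grids, so the total size across all subproblems is $O(n)$ and Theorem~\ref{t:constdiam-PDS} solves each in linear time. I expect the main obstacle to be the coverage computation for $N_{i,j}(p)$---in particular, making precise that its expectation equals the length-to-period ratio despite the shifts being discrete (multiples of $2$). As in the $\rho(p)$ computation of Theorem~\ref{t:WIS}, one checks that summing $N_{i,j}(p)$ over the $k$ shifts counts the even integers in an interval of length $2k+4$, which is exactly $k+2$ for generic points and is unaffected by the measure-zero coincidences where a point falls on a shifted grid line.
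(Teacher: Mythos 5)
Your proposal is correct and follows essentially the same route as the paper's proof: the same choice of $k$ (note $\tfrac{4+\eps}{4}=1+\tfrac{\eps}{4}$), the same expanded cells $C^+$ of side $2k+4$, the same subproblems $(P\cap C^+,\,P'\cap C)$ solved via Theorem~\ref{t:constdiam-PDS}, and the same averaging argument over the $k^2$ shifts with $\E$ of the coverage count equal to the area ratio $\bigl(\tfrac{k+2}{k}\bigr)^2$. If anything, you are slightly more explicit than the paper on two points it leaves implicit---that $\OPT\cap C^+$ is feasible for each subproblem, and the one-dimensional counting argument behind the expectation---but these are elaborations of the same argument, not a different approach.
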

\begin{proof}
Let $k$ be the smallest integer such that 
\begin{equation}\label{eq:k2}
\left(\frac{k+2}{k}\right)^2 \leq 1+\frac{\eps}{4}.
\end{equation} 

The algorithm proceeds as follows. For $i,j$ from $0$ to $k-1$, create a grid with cells of side $2k$ rooted at $(2i,2j)$ and, for each cell $C$ in the grid, use Theorem~\ref{t:constdiam-PDS} to $4$-approximate the PDS with point sets $P \cap C_2^+$ (the points of $P$ that belong to the $2$-expansion of $C$, see Fig.~\ref{f:expansion}) and $P' \cap C$, obtaining a solution $D_{i,j}(C)$. The dominating set $D_{i,j}$ is constructed as the union of the dominating sets $D_{i,j}(C)$ for all grid cells $C$. Return the smallest dominating set $D_{i,j}$ that is found, call it $D^*$.

\begin{figure}
 \centering
 \includegraphics[scale=.45]{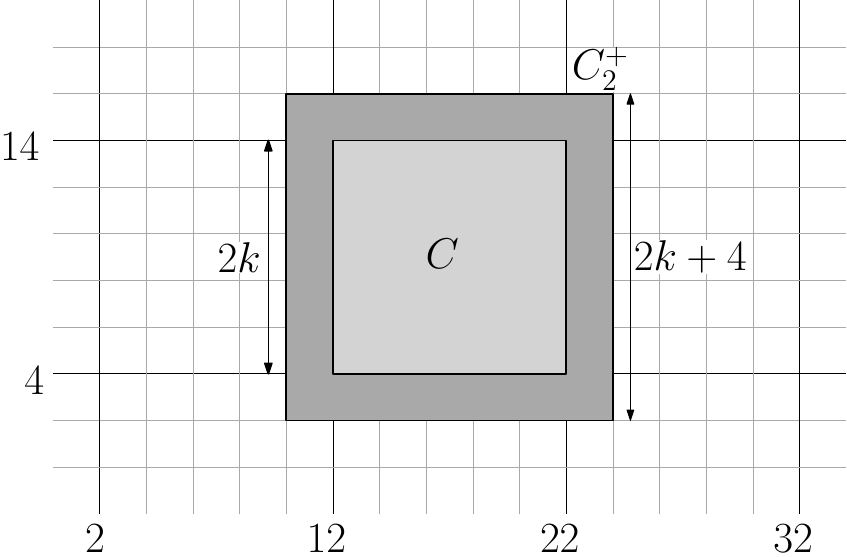}
 \caption{Grid rooted at $(2,4)$ with cells of side $10$ and the $2$-expansion of a cell.}
 \label{f:expansion}
\end{figure}

To prove that the returned solution is indeed a $(4+\eps)$-{ap\-prox\-i\-ma\-tion}, we use again a probabilistic argument. Let $i,j$ be picked uniformly at random from $0,\ldots,k-1$ and let $\OPT$ denote the optimal solution. For every cell $C$, we have $$|D_{i,j}(C)| \leq 4~\left|\OPT \cap C_2^+\right|.$$ Consequently, by summing over all grid cells, 
$$|D_{i,j}| = \sum_C |D_{i,j}(C)| \leq 4 \sum_C \left|\OPT \cap C_2^+\right|.$$ 

We can now bound 
$\E[|D_{i,j}|]$, since
$$
\frac{\E[|D_{i,j}|]}{4} \leq ~\E\left[\sum_C \left|\OPT \cap C_2^+\right|\right] = 
\E\left[\sum_{p \in \OPT}|\CC_2^+(p)|\right] =
\sum_{p \in \OPT}\E\left[|\CC_2^+(p)|\right]
$$
by the linearity of expectation.
Note that since the expected size of $\CC_2^+(p)$ is the same for all points, it corresponds 
to the ratio between the areas of $C_2^+$ and $C$,
namely 
$$
\E\left[|\CC_2^+(p)|\right] = \frac{\mathrm{area}(C_2^+)}{\mathrm{area}(C)} = \left(\frac{k+2}{k}\right)^2.
$$
Therefore, by using inequality~(\ref{eq:k2}), we obtain
$$
\E[|D_{i,j}|] \leq 4\left(\frac{k+2}{k}\right)^2 |\OPT| \leq 4\left(1+\frac{\eps}{4}\right) |\OPT| = (4+\eps)~|\OPT|.
$$
Since the smallest among the dominating sets $D_{i,j}$ has no more than their average number of elements, the set $D^*$ returned by the algorithm satisfies 
$$
|D^*| \leq \E[|D_{i,j}|] \leq (4+\eps)~|\OPT|,
$$ closing the proof.
\end{proof}

The DS is the special case of the PDS in which $P' = P$, and thus it can be $(4+\eps)$-approximated in linear time by the same algorithm.

In the \emph{minimum distance $d$-dominating set problem} (MDDS), the input consists of a graph and an integer $d$, and the goal is to find a minimum subset of vertices such that all graph vertices are within distance at most $d$ from a vertex in the subset (here $d$ is the graph distance, that is, the number of edges in a shortest path). The DS is a special case in which $d=1$. Our algorithm for the DS can be generalized to $(4+\eps)$-approximate the MDDS in linear time for constant $d$. In contrast, the greedy heuristic that gives a $5$-approximation to the DS gives a $\Theta(d^2)$-approximation to the MDDS.

\section{VC for UDGs} \label{s:VC}

In this section, we show how to obtain a linear-time approximation scheme to the VC on unit disk graphs. We start by presenting an approximation scheme for point sets of constant diameter, and then we use the shifting strategy to generalize the result to arbitrary diameter. Differently from the two previous problems, the size of a minimum vertex cover for a point set of constant diameter is not bounded by a constant. Therefore, strictly speaking, a coreset for the problem does not exist. Nevertheless, it is possible to use coresets to approach the problem indirectly.
We remark that a different linear-time approximation scheme to the VC was presented by Marx~\cite{marx05}.

Given a graph $G=(V,E)$ with $n$ vertices, it is well known that $I$ is an independent set if and only if $V \setminus I$ is a vertex cover. While a maximum independent set corresponds to a minimum vertex cover, a constant approximation to the maximum independent set does not necessarily correspond to a constant approximation to the minimum vertex cover. However, in certain cases, an even stronger correspondence holds, as we show in the following proof.

\begin{thm} \label{t:constdiam-VC}
Given a set $P$ of $n$ points as input, with $\diam(P) = O(1)$, the VC can be $(1+\eps)$-approximated in $O(n)$ time on the real-RAM, for constant $\eps > 0$.
\end{thm}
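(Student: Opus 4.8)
The plan is to exploit the exact identity between vertex cover and independent set together with the fact that constant diameter forces the independence number to be a constant. First I would observe that, since $\diam(P)=O(1)$, a simple packing argument bounds the maximum independent set of the corresponding UDG: any independent set consists of points that are pairwise at distance strictly greater than $2$, so the open unit disks centered at these points are pairwise disjoint and all contained in a region of constant diameter; hence the independence number $\alpha$ is at most some constant $c=c(\diam(P))$. Because a set $I$ is independent if and only if its complement $P\setminus I$ is a vertex cover, the minimum vertex cover has size exactly $\OPT = n-\alpha$, which for large $n$ is close to $n$.

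The algorithm then splits into the two cases announced for the method. Fix the constant threshold $n_0 = \lceil (3/4+\eps)\,c/\eps\rceil$. If $n\le n_0$, the whole instance already has constant size, so I would solve the VC exactly---for example by computing a maximum independent set of $P$ by brute force and returning its complement. This costs time that depends only on $n_0$ (hence on $\eps$ and $c$) and is constant with respect to $n$, and it returns an optimal, a fortiori $(1+\eps)$-approximate, solution.

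If instead $n > n_0$, I would invoke the WIS coreset of Theorem~\ref{t:constdiam-WIS} with all weights set to $1$, obtaining in $O(n)$ time an independent set $I\subseteq P$ that is a $4$-approximation of the maximum independent set, i.e.\ $|I|\ge \alpha/4$. I then return the vertex cover $P\setminus I$, whose size is $n-|I|\le n-\alpha/4$. It remains to check the approximation ratio: a short computation gives $(1+\eps)(n-\alpha)-(n-\alpha/4)=\eps n-(3/4+\eps)\alpha$, which is nonnegative precisely when $\eps n\ge (3/4+\eps)\alpha$; since $\alpha\le c$ and $n>n_0$, this inequality holds, so $|P\setminus I|\le (1+\eps)\,\OPT$. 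The overall running time is $O(n)$, dominated by building the WIS coreset.

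The conceptual crux---and the place where the ``stronger correspondence'' alluded to above is used---is the second case: a mere constant-factor approximation to the independent set normally says nothing about the vertex cover, but here it suffices because $\OPT=n-O(1)$ is so large that the constant additive loss $\alpha-|I|\le 3\alpha/4$ is absorbed into the $\eps\,\OPT$ slack. I expect the main technical care to go into choosing $n_0$ so that both branches deliver the same $(1+\eps)$ guarantee, and into confirming that the small-case exact solver is genuinely constant-time in $n$ (its cost is $2^{O(n_0)}$, which accounts for the $\eps$-dependence of the running time but is independent of $n$).
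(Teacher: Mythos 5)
Your proposal is correct and matches the paper's own proof essentially step for step: the same case split at the threshold $n_0 = \Theta\bigl((3/4+\eps)c/\eps\bigr)$ (the paper writes it as $n < (1+\frac{3}{4\eps})\frac{(\diam(P)+2)^2}{4}$, which is the same quantity with the packing bound made explicit), exact brute force in the small case, and in the large case the complement of the unit-weight $4$-approximate independent set from Theorem~\ref{t:constdiam-WIS}, with the additive loss $\alpha - |I| \le 3\alpha/4$ absorbed by the slack $\eps\,\OPT$. The algebra is arranged slightly differently but is equivalent, so there is nothing to add.
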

\begin{proof}
Our algorithm considers two cases, depending on the value of $n$. If
$$n < \left(1+\frac{3}{4\eps}\right)\frac{(\diam(P)+2)^2}{4},$$
then $n$ is constant, and the VC can be solved optimally in constant time.

Otherwise, use Theorem~\ref{t:constdiam-WIS} to obtain a $4$-{ap\-prox\-i\-ma\-tion} $I$ to the maximum independent set (other constant factor approximations can be used, adjusting the threshold accordingly). We now show that $V = P \setminus I$ is a $(1+\eps)$-approximation to the minimum vertex cover. Let $I_{\OPT}, V_{\OPT}$ respectively be the maximum independent set and the minimum vertex cover. Note that $|V| = n - |I|$ and $|V_{\OPT}| = n - |I_{\OPT}|$. By a simple packing argument, dividing the area of a disk of diameter $\diam(P)+2$ by the area of a unit disk,
$$|I_{\OPT}| \leq \frac{(\diam(P)+2)^2}{4},$$
and consequently
$$n \geq \left(1+\frac{3}{4\eps}\right)|I_{\OPT}| = \left(1+\frac{3}{4\eps}\right)(n-|V_{\OPT}|).$$
Manipulating the previous inequality, we obtain
\begin{equation}\label{eq:n1}
n \leq \frac{4\eps+3}{3} ~ |V_{\OPT}|.
\end{equation}
Since $I$ is a $4$-approximation to $I_{\OPT}$,
\begin{equation}\label{eq:n2}
|V| = n - |I| \leq n - \frac{|I_{\OPT}|}{4} = \frac{4n - |I_{\OPT}|}{4} = \frac{3n + |V_{\OPT}|}{4}.
\end{equation}
Combining (\ref{eq:n1}) and (\ref{eq:n2}), we can write 
$|V| \leq (1+\eps)|V_{\OPT}|,$
as desired. 
\end{proof}

Using the shifting strategy the following result ensues.

\begin{thm} \label{t:VC}
Given a set $P$ of $n$ points in the plane as input, the VC for the corresponding UDG can be $(1+\eps)$-approximated in $O(n)$ time on the real-RAM with constant-time hashing and the floor function, for constant $\eps > 0$. Without these operations, it can be done in $O(n \log n)$ time.
\end{thm}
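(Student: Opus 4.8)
The plan is to lift Theorem~\ref{t:constdiam-VC} to arbitrary diameter using exactly the shifting template of Theorem~\ref{t:PDS}. Since the vertex cover is a \emph{covering} problem, like the dominating set, I expect to need cell \emph{expansions} $C^+$ (as in Theorem~\ref{t:PDS}) rather than the contractions $C^-$ used for the independent set in Theorem~\ref{t:WIS}. Concretely, I would fix an auxiliary parameter $\eps' \in (0,\eps)$ and let $k$ be the smallest integer with $(1+\eps')\left(\frac{k+2}{k}\right)^2 \le 1+\eps$, then consider grids of square cells of side $2k$ rooted at $(2i,2j)$ for $i,j$ ranging over $0,\dots,k-1$, writing $C^+$ for the set of points within $L_\infty$ distance $2$ from a cell $C$. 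For each rooting and each cell $C$, I would run the constant-diameter algorithm of Theorem~\ref{t:constdiam-VC} with parameter $\eps'$ on the point set $P \cap C^+$, obtaining a $(1+\eps')$-approximate vertex cover $V_{i,j}(C)$ of the induced unit disk graph $G[P\cap C^+]$; this is legitimate because $\diam(C^+)=O(k)=O(1)$. The algorithm returns the smallest of the unions $V_{i,j}=\bigcup_C V_{i,j}(C)$ over all $k^2$ rootings.

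The first thing to verify is feasibility. For a fixed rooting and an edge $\{u,v\}$ of the UDG, let $C$ be the cell containing $u$; since $\|uv\|\le 2$ we have $v$ within $L_\infty$ distance at most $2$ of $u\in C$, hence $v\in C^+$, so the edge lies inside $G[P\cap C^+]$ and is covered by $V_{i,j}(C)$. Thus every $V_{i,j}$ is a genuine vertex cover of the whole graph. For the approximation guarantee I would reuse the expansion-averaging argument of Theorem~\ref{t:PDS} essentially verbatim, the key extra observation being that $\OPT \cap C^+$ is itself a feasible cover of $G[P\cap C^+]$: every edge of that subgraph is an edge of $G$, hence covered by $\OPT$, and the covering endpoint lies in $C^+$. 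Therefore the minimum cover of $G[P\cap C^+]$ has size at most $|\OPT \cap C^+|$, so $|V_{i,j}(C)| \le (1+\eps')\,|\OPT \cap C^+|$. Summing over $C$ and taking the expectation over a uniformly random rooting gives, by the same computation as in Theorem~\ref{t:PDS}, $\E\!\left[\sum_C |\OPT \cap C^+|\right] = \left(\frac{k+2}{k}\right)^2 |\OPT|$, whence $\E[|V_{i,j}|] \le (1+\eps')\left(\frac{k+2}{k}\right)^2|\OPT| \le (1+\eps)\,|\OPT|$. Since the returned cover is no larger than the average, it is a $(1+\eps)$-approximation.

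The running-time and implementation analysis is identical to that of Theorems~\ref{t:WIS} and~\ref{t:PDS}: partitioning $P$ into side-$2$ subcells (via the floor function and constant-time hashing, or via connected components and sorting in the weaker model) lets each instance $P\cap C^+$ be assembled as the union of $O(1)$ subcells, each point lies in $O(1)$ expanded cells so the total instance size is $O(n)$, and Theorem~\ref{t:constdiam-VC} runs in time linear in each instance. This yields $O(n)$ time overall, or $O(n\log n)$ without the floor function and hashing.

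I expect the main obstacle to be conceptual rather than computational: recognizing that, despite Theorem~\ref{t:constdiam-VC} being phrased through the independent-set complement, one must \emph{not} lift VC to arbitrary diameter by globally complementing a shifting independent set. A global complement of a $(4+\eps)$-approximate independent set fails precisely when the optimum cover is small compared with $n$ (e.g.\ sparse instances where $|I_{\OPT}|$ is close to $n$), so the constant-diameter solver really must be invoked cell by cell. The only genuine care then needed is to use expansions $C^+$, rather than the cells themselves, so that no boundary-crossing edge—whose endpoints lie in distinct cells—escapes every subproblem; this is what replaces the independent-set contraction and makes the feasibility argument of the second paragraph go through.
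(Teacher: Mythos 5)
Your proposal is correct and follows essentially the same route as the paper's proof: shifting over grids of side $2k$ with cell expansions $C^+$, invoking Theorem~\ref{t:constdiam-VC} with a reduced parameter (the paper fixes $\eps' = \eps/2$), and the same expectation argument over random rootings giving $\E[|V_{i,j}|] \leq (1+\eps')\left(\frac{k+2}{k}\right)^2|\OPT| \leq (1+\eps)|\OPT|$. The only difference is that you make explicit two steps the paper leaves implicit—that every edge of the UDG lies inside some $G[P\cap C^+]$ (feasibility of the union) and that $\OPT \cap C^+$ is a feasible cover of the subproblem, which justifies $|V_{i,j}(C)| \leq (1+\eps')\,|\OPT \cap C^+|$—both of which are correct.
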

\begin{proof}
Let $k$ be the smallest integer such that 
\begin{equation}\label{eq:c2}
\left(\frac{k+2}{k}\right)^2 \leq \frac{1+\eps}{1+\frac{\eps}{2}}.
\end{equation} 

The algorithm proceeds as follows. For $i,j$ from $0$ to $k-1$, create a grid with cells of side $2k$ rooted at $(2i,2j)$ and, for each cell $C$ in the grid, use Theorem~\ref{t:constdiam-VC} to $(1+\eps/2)$-approximate the VC for $P \cap C_2^+$, obtaining a solution $V_{i,j}(C)$. The vertex cover $V_{i,j}$ is constructed as the union of the vertex covers $V_{i,j}(C)$ for all grid cells $C$. Return the smallest vertex cover $V_{i,j}$ that is found, call it $V^*$.

To prove that the returned solution is indeed a $(1+\eps)$-{ap\-prox\-i\-ma\-tion}, once again we use a probabilistic argument. Let $i,j$ be picked uniformly at random from $0,\ldots,k-1$ and let $\OPT$ denote the optimal solution. For every cell $C$, we have
$$|V_{i,j}(C)| \leq \left(1+\frac{\eps}{2}\right)~\left|\OPT \cap C_2^+\right|.$$
Consequently, by summing over all grid cells,
$$|V_{i,j}| = \sum_C |V_{i,j}(C)| \leq \left(1+\frac{\eps}{2}\right) \sum_C \left|\OPT \cap C_2^+\right|.$$

We can now bound 
$\E[|V_{i,j}|]$, since 
$$
\frac{\E[|V_{i,j}|]}{1+\frac{\eps}{2}} \leq ~\E\left[\sum_C \left|\OPT \cap C_2^+\right|\right] = 
\E\left[\sum_{p \in \OPT}|\CC_2^+(p)|\right] =
\sum_{p \in \OPT}\E\left[|\CC_2^+(p)|\right]
$$
by the linearity of expectation.
Note that since the expected size of $\CC_2^+(p)$ is the same for all points, it corresponds to the ratio between the areas of $C_2^+$ and $C$, namely 
$$
\E\left[|\CC_2^+(p)|\right] = \frac{\mathrm{area}(C_2^+)}{\mathrm{area}(C)} = \left(\frac{k+2}{k}\right)^2.
$$
Therefore, by using inequality~(\ref{eq:c2}), we obtain
$$
\E[|V_{i,j}|] \leq \left(1+\frac{\eps}{2}\right) \left(\frac{k+2}{k}\right)^2 |\OPT| \leq (1+\eps)~|\OPT|.
$$
Since the smallest among the vertex covers $V_{i,j}$ has no more than their average number of elements, the set $V^*$ returned by the algorithm satisfies 
$$|V^*| \leq \E[|V_{i,j}|] \leq (1+\eps)~|\OPT|,$$
closing the proof.
\end{proof}

\section{WIS for Rectangles of Bounded Size}

In this section, we consider the case in which the input is no longer a set of points, but a set of rectangles instead. Let $\lambda$ be a constant and $S$ a set of axis-aligned rectangles $R_1,\ldots,R_n$ in the plane, such that each rectangle $R_q$, for $q = 1,\ldots,n$, has width and height between $1$ and $\lambda$, and weight $w(R_q)$. Let $G$ be the intersection graph of $S$. The shifting coresets method is applied to obtain a linear-time $(6+\eps)$-approximation algorithm to the maximum-weight independent set of $G$. 

The \emph{overlap} of two rectangles $R_q,R_s$ is defined as the minimum (horizontal or vertical) translation distance of a single rectangle $R_q$ necessary to make the interiors of $R_q$ and $R_s$ disjoint. The following lemma bounds the chromatic number of the intersection graph of rectangles with a small overlap.

\begin{lem} \label{l:rect}
If $S$ is a set of axis-aligned rectangles such that
\begin{enumerate}
 \item[(1)] the width and height of each rectangle is at least $1$, and
 \item[(2)] $overlap(R_q,R_s) < 1/3$ for every two distinct rectangles $R_q,R_s \in S$,
\end{enumerate}
then the intersection graph $G$ of $S$ is $6$-colorable.
\end{lem}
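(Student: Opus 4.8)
The plan is to prove the $6$-colorability bound by showing that the intersection graph $G$ is \emph{degenerate}, that is, every subgraph contains a vertex of bounded degree, and then to invoke the standard greedy-coloring argument: a $d$-degenerate graph is $(d+1)$-colorable. Concretely, I aim to show that $G$ is $5$-degenerate, which immediately yields $6$-colorability. Since the class of graphs defined by conditions (1)--(2) is closed under taking subgraphs (removing rectangles only weakens the constraints), it suffices to exhibit in \emph{any} such set $S$ a single rectangle that intersects at most $5$ others; applying this to every vertex-induced subset gives the degeneracy.

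To locate a low-degree rectangle, I would first normalize the picture. Among all rectangles in $S$, pick $R$ to be one whose top edge is lowest (smallest maximal $y$-coordinate), breaking ties arbitrarily; call its vertical extent $[y_0, y_1]$ with $y_1 - y_0 \ge 1$ by condition~(1). Any rectangle intersecting $R$ must reach up to at least height $y_0$, but by the choice of $R$ its own top edge lies at height $\ge y_1 \ge y_0 + 1$, so every neighbor of $R$ straddles the horizontal line $y = y_0$ (it contains points at height $\ge y_1$ and, to meet $R$, points at height $\le y_1$). The key step is then a counting argument along this line: the neighbors of $R$, projected onto the $x$-axis, are intervals of length $\ge 1$, and by the overlap condition~(2) their horizontal projections can overlap each other only very slightly. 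I would argue that any such rectangle's $x$-projection must come within horizontal distance less than $1/3$ of $R$'s $x$-projection (otherwise the minimal translation separating it from $R$ would exceed $1/3$, contradicting condition~(2), since they do intersect). Thus all neighbors have $x$-projections confined to an interval only a little wider than $R$'s own width, and since each projection has length $\ge 1$ while pairwise horizontal overlaps are $< 1/3$, a packing estimate caps the number of neighbors by a constant.

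The main obstacle I anticipate is making the counting bound tight enough to reach exactly $5$ rather than some larger constant. The overlap condition is subtle: two rectangles can intersect with small overlap in \emph{one} direction while overlapping substantially in the other, so I must be careful that $overlap(R_q,R_s) < 1/3$ constrains the \emph{smaller} of the horizontal and vertical separation distances, and deduce from this and the intersection that both projections nearly coincide with those of $R$. The right way to organize the count is probably to consider the four ``sides'' of $R$ separately (left, right, top, bottom) and show each side can be met by at most a bounded number of rectangles, since a neighbor overlapping $R$ by less than $1/3$ in a given direction must cover a large fraction of $R$'s extent in the perpendicular direction, leaving little room for others by the same small-overlap constraint. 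Summing these contributions and carefully handling double-counting should produce the bound of $5$ neighbors, at which point the degeneracy argument closes the proof. I expect the geometric bookkeeping in this final count to be where all the real work lies; everything before it is routine.
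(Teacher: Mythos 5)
Your overall route---exhibit a low-degree vertex in every induced subgraph, conclude that $G$ is $5$-degenerate, and color greedily---is genuinely different from the paper's proof, which never looks for low-degree vertices at all: the paper draws $G$ as a \emph{$1$-planar} graph (each vertex at its rectangle's center, each edge routed through the center of the corresponding intersection rectangle, with ``two-corner'' edges shown to be crossing-free and ``one-corner'' edges crossed at most once) and then invokes Borodin's theorem that $1$-planar graphs are $6$-colorable. Unfortunately, your plan fails at its central step, and the failure is not mere bookkeeping: the rectangle with the lowest top edge need \emph{not} have degree at most $5$, nor even bounded degree, because the lemma places no upper bound on side lengths. Concretely, take $R=[0,2N]\times[0,1]$ together with the unit squares $S_i=[2i,2i+1]\times[9/10,19/10]$ for $i=0,\ldots,N-1$. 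All widths and heights are at least $1$; the squares are pairwise disjoint; each $S_i$ meets $R$ in a strip of width $1$ and height $1/10$, so every pairwise overlap is at most $1/10<1/3$. Yet $R$ has the lowest top edge and has $N$ neighbors. This also pinpoints which of your intermediate claims break: ``all neighbors have $x$-projections confined to an interval only a little wider than $R$'s own width'' is useless here because that interval has length about $\mathrm{width}(R)$, which is unbounded, so your packing estimate caps the degree at roughly $\mathrm{width}(R)$ rather than at a constant. Moreover, the claim that neighbors' horizontal projections pairwise overlap by less than $1/3$ is false in general, since condition (2) constrains only the \emph{minimum} of the horizontal and vertical overlaps: with $R=[0,1]^2$, the rectangles $[-N,N]\times[9/10,19/10]$ and $[17/25,42/25]\times[1/10,11/10]$ are both legitimate neighbors of $R$, satisfy all hypotheses pairwise, and have $x$-projections overlapping by a full unit.

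It is worth recording what is salvageable. If the chosen rectangle $R$ happens to have width close to $1$, a correct count does give degree at most $5$: split the neighbors into \emph{shallow} ones (intersection with $R$ of height $<1/3$), whose pairwise vertical overlaps then exceed $2/3$, forcing pairwise horizontal overlaps below $1/3$ and hence at most $3$ of them by an interval-packing argument, and \emph{deep} ones (intersection height $\geq 1/3$, hence intersection width $<1/3$), of which there is at most one sticking out each side of $R$. But no extremal choice of rectangle guarantees both the needed extremality and small width, and it is not clear that the class in the lemma is $5$-degenerate at all; even granting the paper's $1$-planar drawing, the edge bound $4n-8$ for $1$-planar graphs yields only $7$-degeneracy, i.e.\ $8$ colors by greedy coloring. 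Borodin's theorem is exactly the ingredient that converts the drawing into $6$ colors without any minimum-degree claim. So as written your proposal has a genuine gap---its key quantitative step is false for the stated hypotheses---and repairing it within a pure degeneracy framework would require a new global (Euler-type or discharging) argument rather than an extremal-vertex one.
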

\begin{proof}
Let $S$ be a set as required and $G$ its intersection graph. A graph is \emph{$1$-planar} if it can be drawn on the plane in a way that each edge intersects at most one other edge.  Borodin showed that $1$-planar graphs are $6$-colorable~\cite{Borodin}. We prove the lemma by providing such a $1$-planar drawing for $G$.

For each rectangle $R_q \in S$, draw vertex $v_q$ on the center of $R_q$. Given two intersecting rectangles $R_q,R_s$, the edge $v_qv_s$ is drawn as two straight line segments, connecting $v_q$ to the center of the rectangle $R_q \cap R_s$ and then to $v_s$.  We show that at most one other edge may cross the edge $v_qv_s$. Note that the edge $v_qv_s$ is completely inside the region $R_q \cup R_s$. 

When two rectangles $R_q, R_s \in S$ intersect one another, there are two possible types of edges corresponding to the relative positions of the rectangles (Fig.~\ref{f:rect-2cases}):
\begin{enumerate}
\item $R_q$ contains two corners of $R_s$ (or vice versa); or
\item $R_q$ contains one corner of $R_s$, and vice versa.
\end{enumerate}

\begin{figure}
 \centering
 \includegraphics[scale=.45]{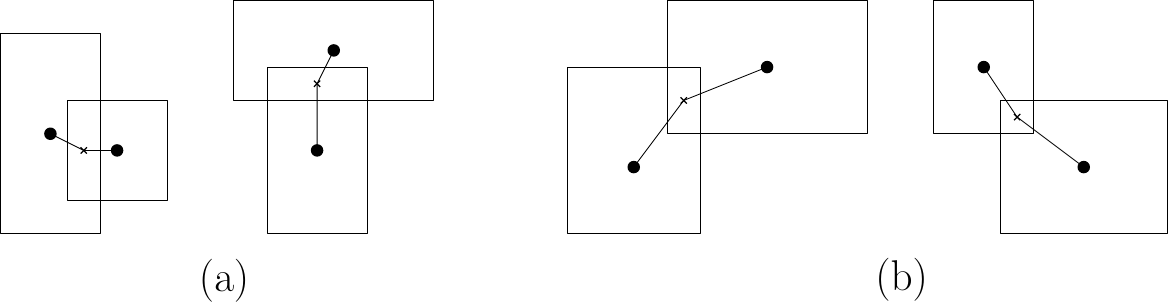}
 \caption{Examples of (a) type-1 and (b) type-2 edges.}
 \label{f:rect-2cases}
\end{figure}

We show that a type-$1$ edge $v_qv_s$ cannot possibly be crossed by any other edge. Indeed, if $R_q$ contains two corners of $R_s$, then the straight segment from $v_s$ to the center of $R_q \cap R_s$ belongs to an axis-aligned line $\ell$ that bisects $R_s$. This segment cannot be crossed by any edge $v_{q'}v_{s'}$. Otherwise, the centers of the intersecting rectangles $R_{q'}, R_{s'} \in S$ would belong to distinct halfplanes defined by $\ell$, and at least one of these rectangles, say $R_{q'}$, should cross~$\ell$. Since, by the maximum overlap allowed in $S$, the length of the segment of $\ell$ which can be contained in $R_{q'}$ measures at most $2/3$ (i.e., $1/3$ inside $R_q$ plus $1/3$ inside $R_s$), a contradiction ensues, because $R_{q'}$ measures at least $1$ on both dimensions. The straight segment from $v_r$ to the center of $R_q \cap R_s$, on its turn, cannot be crossed by any other edge $v_{q'}v_{s'}$ because, since the edge $v_{q'}v_{s'}$ is drawn completely inside the region $R_q \cup R_s$, one of the rectangles, say $R_{q'}$, should intersect that segment, yet it should not contain any of its points whose distance to the border of $R_q$ that intersects $R_s$ is greater than $1/3$. But now, since both sides of $R_{q'}$ are greater than $1$, it follows that $overlap(R_s,R_{q'}) \geq 2/3$, a contradiction.

It remains to show that a type-$2$ edge $v_qv_s$ can be crossed by at most one other edge (which must also be a type-$2$ edge). Suppose, for the sake of contradiction, that $v_qv_s$ is crossed by two other edges, $v_{q'}v_{q''}$ and $v_{s'}v_{s''}$, as illustrated in Fig.~\ref{f:rect-type2}. Now, without loss of generality, the maximum allowed overlap implies that:
\begin{figure}
 \centering
 \includegraphics[scale=.45]{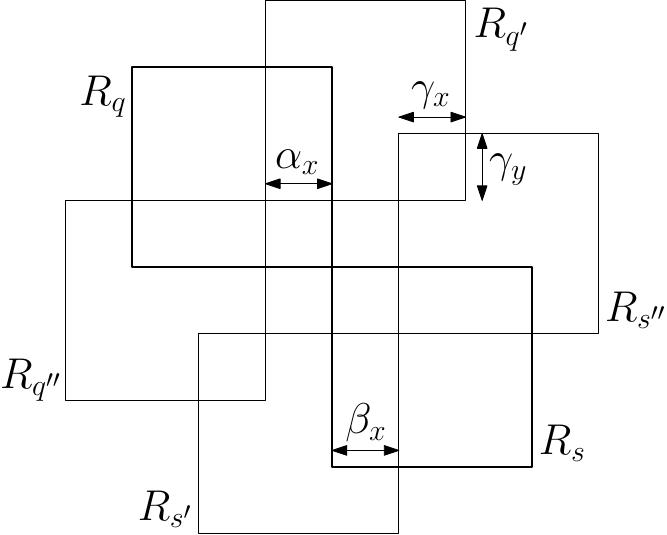}
 \caption{Crossing twice a type-2 edge.}
 \label{f:rect-type2}
\end{figure}
\begin{itemize}
\item the width $\alpha_x$ of $R_q \cap R_{q'}$ satisfies $\alpha_x < 1/3$;
\item the width $\beta_x$ of $R_s \cap R_{s'}$ satisfies $\beta_x < 1/3$;
\item the width $\gamma_x$ and height $\gamma_y$ of $R_{q'} \cap R_{s''}$ satisfy either $\gamma_x < 1/3$ or $\gamma_y < 1/3$.
\end{itemize}
If $\gamma_x < 1/3$, then the width of $R_{q'}$ is $\alpha_x + \beta_x + \gamma_x < 1$, a contradiction. If, on the other hand, $\gamma_y < 1/3$, then an analogous contradiction ensues on the $y$ direction.
\end{proof}

Given a set $S$ of rectangles, the \emph{diameter} $\diam(S)$ is the maximum distance between two vertices of the rectangles in $S$.
We are now able to prove the following theorem, which presents our $6$-{ap\-prox\-i\-ma\-tion} algorithm for sets of constant diameter.

\begin{thm} \label{t:constdiam-rect}
Given a set $S$ of $n$ axis-aligned weighted rectangles as input, such that $\diam(S) = O(1)$ and each rectangle in $S$ has width and height at least $1$, the WIS on the intersection graph of $S$ can be $6$-approximated in $O(n)$ time on the real-RAM.
\end{thm}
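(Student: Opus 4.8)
The plan is to mirror the proof of Theorem~\ref{t:constdiam-WIS}, replacing the planar Four-Color argument by the $1$-planar six-coloring of Lemma~\ref{l:rect}. First I would describe the algorithm. Represent each rectangle $R_q \in S$ by the $4$-dimensional point whose coordinates are the two opposite corners of $R_q$. Since $\diam(S) = O(1)$ and all side lengths are at most $\lambda = O(1)$, these points lie in a $4$-dimensional box of constant size; cover it by a grid of cells of diameter $0{.}16$, so that the number of cells is constant. Partition the $n$ points among the cells in $O(n)$ time, and for each non-empty cell select one rectangle of maximum weight to form the coreset $Q \subseteq S$. As $|Q| = O(1)$, the maximum-weight independent set $I^*$ of $Q$ can be found by brute force in constant time; return $I^*$.

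Next I would argue that $I^*$ is a $6$-approximation by showing that every independent set $I \subseteq S$ admits an independent set $I' \subseteq Q$ with $6\,w(I') \geq w(I)$. For each $R \in I$, let $q(R) \in Q$ be the maximum-weight representative of the cell containing $R$, so that $w(q(R)) \geq w(R)$. A useful preliminary observation is that $q$ is injective on $I$: two rectangles in the same cell have all four corner coordinates within $0{.}16$ of each other, and since each rectangle has width and height at least $1$, their $x$- and $y$-projections necessarily overlap, so the two rectangles intersect---which is impossible for two members of the independent set $I$. Hence $S' = \{q(R) : R \in I\}$ is a set of $|I|$ distinct rectangles with $w(S') = \sum_{R \in I} w(q(R)) \geq w(I)$.

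The core of the argument is to verify that $S'$ satisfies the hypotheses of Lemma~\ref{l:rect}. Hypothesis~(1) is immediate, since every rectangle retains width and height at least $1$. For hypothesis~(2), take two distinct $R_1,R_2 \in I$; being independent, they are separated along at least one axis, say horizontally, so the right edge of one precedes the left edge of the other. Passing to the representatives perturbs each of these two edges by at most $0{.}16$, since any single coordinate difference is bounded by the Euclidean cell diameter $0{.}16$. Thus the horizontal projections of $q(R_1)$ and $q(R_2)$ can overlap by at most $2 \cdot 0{.}16 = 0{.}32 < 1/3$, and because the overlap of two rectangles is the \emph{minimum} of the horizontal and vertical separating translations, $\mathit{overlap}(q(R_1),q(R_2)) \le 0{.}32 < 1/3$ (this also covers the case where the perturbed rectangles become disjoint, where the overlap is $0$). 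By Lemma~\ref{l:rect}, the intersection graph of $S'$ is $6$-colorable, so $S'$ partitions into six independent sets whose heaviest member $I'$ satisfies $w(I') \geq w(S')/6 \geq w(I)/6$. Applying this with $I = \OPT$ and using that $I^*$ is the maximum-weight independent set of $Q \supseteq I'$ yields $w(I^*) \geq w(I') \geq w(\OPT)/6$, as required. I expect the main obstacle to be pinning down this overlap bound---confirming that the diameter $0{.}16$ is small enough to force the overlap strictly below the threshold $1/3$ demanded by Lemma~\ref{l:rect}, and handling the overlap definition correctly for pairs that may become disjoint after the perturbation.
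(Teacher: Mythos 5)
Your proof is correct and essentially identical to the paper's: both build a coreset by gridding a four-dimensional representation of the rectangles with cells of diameter $0.16$, keep one maximum-weight representative per nonempty cell, and invoke Lemma~\ref{l:rect} to six-color the set of representatives of an independent set. The only divergence is that you encode each rectangle by its two opposite corners rather than by center, width, and height as the paper does---a choice that, if anything, makes the overlap bound cleaner (each edge moves by at most $0.16$, so the overlap is at most $0.32 < 1/3$), and your explicit verification of the overlap condition and of the injectivity of the cell map on an independent set supplies details the paper leaves implicit.
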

\begin{proof}
Represent each rectangle $R_q \in S$ by four real values $(x_q,y_q,w_q,h_q)$, corresponding 
to the $x$ and $y$ coordinates of its center, its width, and its height. The set $S$ can thus be seen as a constant-diameter set of points in $\mathbb{R}^4$. Create a $4$-dimensional grid with cells of side $\delta = 0{.}1$ (any positive $\delta < 1/9$ suffices), and define the set $S'$ by choosing the element of $S$ with maximum weight inside each non-empty grid cell. Note that since $\diam(S)=O(1)$, we have $|S'| = O(1)$, so the maximum-weight independent set among the points of $S'$ can be computed in constant-time by brute force. Return such a set.

To prove the returned solution is indeed a $6$-approximation, we show that, given an independent set $I \subseteq S$, there is an independent set $I' \subseteq S'$ such that $|I'| \geq |I|/6$. Let $J' \subseteq S'$ be the set of rectangles obtained by selecting, for each rectangle $R_i \in I$, the rectangle $R'_i \in S'$ whose corresponding $4$-dimensional point lies inside the same grid cell as that of $R_i$. Note that since $S'$ contains the maximum weight rectangle inside each grid cell, we have $w(J') \geq  w(I)$, even though $J'$ may not be an independent set. We claim that the rectangles in $J'$ overlap by less than $1/3$, hence Lemma~\ref{l:rect} can be employed to partition $J'$ into $6$ independent sets. Let $I'$ be set of maximum weight among the partitions. Since $6 w(I') \geq w(J')$, it follows that $6 w(I') \geq w(I)$, proving the theorem.

To prove the claim, consider two disjoint rectangles $R_1 = (x_1,y_1,w_1,h_1)$, $R_2 = (x_2,y_2,w_2,h_2) \in I$. Let $R'_1 = (x'_1,y'_1,w'_1,h'_1) , R'_2 = (x'_2,y'_2,w'_2,h'_2)$ be the corresponding rectangles in $J'$. Because the grid cells have side $\delta < 1/9$, we have that all the following quantities are less than $1/9$, for $i=1,2$: $|x_i - x'_i|, |y_i - y'_i|, |w_i - w'_i|, |h_i - h'_i|$. The possible horizontal overlap between $R'_1$ and $R'_2$ may come from two sources: a displacement (by less than $1/9$ for each rectangle) and a change in size, which moves the boundary of each rectangle by less than $1/18$. Therefore, the maximum horizontal overlap is at most $2/9 + 2/18 = 1/3$. The same argument bounds the maximum vertical overlap.
\end{proof}

Using the shifting strategy, we extend the result for sets of arbitrary diameter.

\begin{thm} \label{t:rect}
Let $\lambda \geq 1$ be a constant. Given a set $S$ of $n$ axis-aligned weighted rectangles as input, such that each rectangle in $S$ has width and height between $1$ and $\lambda$, the WIS can be $(6+\eps)$-approximated in $O(n)$ time on the real-RAM with constant-time hashing and the floor function. Without these operations, it can be done in $O(n \log n)$ time.
\end{thm}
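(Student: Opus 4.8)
The plan is to reproduce the shifting argument of Theorem~\ref{t:WIS} almost verbatim, substituting the rectangle interaction distance $\lambda$ for the unit-disk interaction distance $2$ and invoking Theorem~\ref{t:constdiam-rect} in place of Theorem~\ref{t:constdiam-WIS}. Let $k$ be the smallest integer with $\left(\frac{k-2}{k}\right)^2 \geq \frac{6}{6+\eps}$; such a $k$ exists because the left-hand side tends to $1$. I would use grids of square cells of side $\lambda k$, rooted at $(\lambda i, \lambda j)$ for $i,j \in \{0,\ldots,k-1\}$, assigning each rectangle to the cell containing its center. For each cell $C$, let its \emph{contraction} $C^-$ be the concentric square obtained by shrinking $C$ by $\lambda$ on each side, so that $\mathrm{area}(C^-)/\mathrm{area}(C) = \left(\frac{\lambda k - 2\lambda}{\lambda k}\right)^2 = \left(\frac{k-2}{k}\right)^2$. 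For each shift $(i,j)$ and each cell $C$, I would feed the rectangles whose centers lie in $C^-$ to the $6$-approximation of Theorem~\ref{t:constdiam-rect}, obtaining $I_{i,j}(C)$; then set $I_{i,j} = \bigcup_C I_{i,j}(C)$ and return the maximum-weight $I_{i,j}$, call it $I^*$.

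The crux of the correctness proof is the cross-cell separation. Since each rectangle has width and height at most $\lambda$, two rectangles can intersect only if the $L_\infty$ distance between their centers is strictly less than $\lambda$; otherwise one of the two coordinate projections is already disjoint. Any two distinct cells differ in at least one grid index, so centers lying in their respective contractions are separated by more than $2\lambda$ in the corresponding coordinate, hence by more than $\lambda$ in $L_\infty$, and the rectangles cannot intersect. Combined with the independence of each $I_{i,j}(C)$, this shows that every $I_{i,j}$ is a genuine independent set. Moreover, the centers assigned to a single cell span a region of side $\lambda k$ and each rectangle has diameter at most $\lambda\sqrt{2}$, so each subproblem has diameter $O(\lambda k) = O(1)$ and Theorem~\ref{t:constdiam-rect} applies in linear time.

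The approximation ratio then follows from the same probabilistic averaging as in Theorem~\ref{t:WIS}. Choosing $(i,j)$ uniformly at random, each cell satisfies $w(I_{i,j}(C)) \geq \frac{1}{6}\, w(\OPT \cap C^-)$, so summing over cells and taking expectations,
\[
\E[w(I_{i,j})] \geq \tfrac{1}{6} \sum_{p \in \OPT} \rho(p)\, w(p) = \tfrac{1}{6}\left(\tfrac{k-2}{k}\right)^2 w(\OPT) \geq \tfrac{1}{6+\eps}\, w(\OPT),
\]
where $\rho(p) = \left(\frac{k-2}{k}\right)^2$ is the probability---identical for every rectangle by translation invariance of the uniform shift---that a center falls inside some contraction, and the final inequality is exactly the defining property of $k$. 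Since $w(I^*)$ is at least the average $\E[w(I_{i,j})]$, the set $I^*$ is a $(6+\eps)$-approximation.

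The running-time analysis is identical to that of Theorem~\ref{t:WIS}: with the floor function and constant-time hashing the centers are bucketed into cells in $O(n)$ time, and without them one partitions within connected components in $O(n\log n)$ time; for each of the $k^2 = O(1)$ shifts every rectangle lies in at most one contraction, so the total size of all constant-diameter instances is $O(n)$. The only genuinely new work compared with Theorem~\ref{t:WIS} is the separation bookkeeping of the second paragraph---verifying that shrinking by $\lambda$ simultaneously preserves the area ratio $\left(\frac{k-2}{k}\right)^2$ and forces cross-cell non-intersection from the side-length bound. I expect this to be the only delicate point, since the substantive combinatorial content (the $6$-colorability of low-overlap rectangle intersection graphs) has already been discharged in Lemma~\ref{l:rect} and Theorem~\ref{t:constdiam-rect}.
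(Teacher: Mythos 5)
Your proposal is correct and follows essentially the same route as the paper's proof: shifted grids with cells of side $\Theta(\lambda k)$, contractions that force cross-cell independence, Theorem~\ref{t:constdiam-rect} on each contracted cell, and the standard probabilistic averaging over the $O(1)$ shifts. The only differences are cosmetic---the paper shrinks each cell by $\lambda/2$ (taking $k$ to be a multiple of $\lambda$) where you shrink by $\lambda$, costing only a slightly larger constant $k$---and your explicit verification that the subproblem input is $P\cap C^-$ (not $P\cap C$) and that contracted cells are separated by more than the interaction distance $\lambda$ is, if anything, stated more carefully than in the paper.
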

\begin{proof}
Let $k$ be the smallest multiple of $\lambda$ such that
\begin{equation}\label{eq:klambda}
\left(\frac{k-\lambda}{k}\right)^2 \geq \frac{6}{6+\eps},
\end{equation} 
and let $k' = k / \lambda$.

Throughout this proof, consider grids with square cells of side $k$ and use the same strategy as in the proof of Theorem~\ref{t:WIS}, with some small modifications. To make sure that the union of two independent sets, each belonging to a different contraction of a cell, is still an independent set, we employ the contraction $C_{\lambda/2}^-$ of $C$.

For $i,j$ from $0$ to $k'-1$, create a grid with cells of side $k$ rooted at $(\lambda i, \lambda j)$. For each cell $C$ in the grid, use Theorem~\ref{t:constdiam-rect} to obtain a $6$-approximation $I_{i,j}(C)$ for the WIS whose input consists of all rectangles whose centers belong to $C$. The independent set $I_{i,j}$ is the union of all $I_{i,j}(C)$. Return the maximum-weight set $I_{i,j}$ that is found, call it $I^*$.

We now prove that the returned solution $I^*$ is indeed a $(6+\eps)$-{ap\-prox\-i\-ma\-tion}. Let $i,j$ be picked uniformly at random from $0,\ldots,k-1$ and let $\OPT$ denote the optimal solution. For every cell $C$, we have
$$w(I_{i,j}(C)) \geq \frac{1}{6}~w(\OPT \cap C_{\lambda/2}^-).$$
Consequently, by summing over all grid cells,
$$w(I_{i,j}) = \sum_C w(I_{i,j}(C)) \geq \frac{1}{6} \sum_C w(\OPT \cap C_{\lambda/2}^-)
= \frac{1}{6} \sum_{R_q \in \OPT}\rho(R_q)~w(R_q),$$
where $\rho(R_q)$ denotes the probability that the center of a given rectangle $R_q$ is contained in some contracted cell. Because such probability is the same for all rectangles, $\E[w(I_{i,j})]$ can be bounded by
$$
\E[w(I_{i,j})] \geq \frac{1}{6}~\rho(R_q)\sum_{R_q \in \OPT}w(R_q)
                 = \frac{1}{6}~\rho(R_q)~w(\OPT).
$$
Note that, for all $R_q \in S$,
$\rho(R_q)$ corresponds to the ratio between the areas of $C_{\lambda/2}^-$ and $C$, namely 
$$
\rho(R_q) = \frac{\mathrm{area}(C_{\lambda/2}^-)}{\mathrm{area}(C)} = \left(\frac{k-\lambda}{k}\right)^2.$$
Thus, inequality~(\ref{eq:klambda}) yields
$$
\E[w(I_{i,j})] \geq \frac{1}{6}\left(\frac{k-\lambda}{k}\right)^2w(\OPT) \geq \frac{1}{6+\eps}~w(\OPT).
$$

Since $I^*$ has maximum weight among the independent sets $I_{i,j}$, it follows that $w(I^*)$ is at least as large as their average weight. Therefore, 
$I^*$ 
satisfies 
$$
w(I^*) \geq \E[w(I_{i,j})] \geq \frac{1}{6+\eps}~w(\OPT),
$$ 
closing the proof.
\end{proof}

\section{Conclusion} \label{s:conclusion}

This paper introduced the method of shifting coresets, which, combining the shifting strategy and the coresets paradigm, has allowed us to obtain improved linear-time approximations for problems on unit disk graphs. The method is applicable to other geometric intersection graph classes as well.
The central idea of the method is the creation of coresets to obtain approximate solutions when the inputs are point sets of constant diameter. 
For the WIS and the DS on UDGs, the proposed algorithms provide improved approximation ratios when compared to existing linear-time algorithms, as shown in Table~\ref{t:table} in the Introduction.

\begin{figure}
 \centering
 \includegraphics[scale=.45]{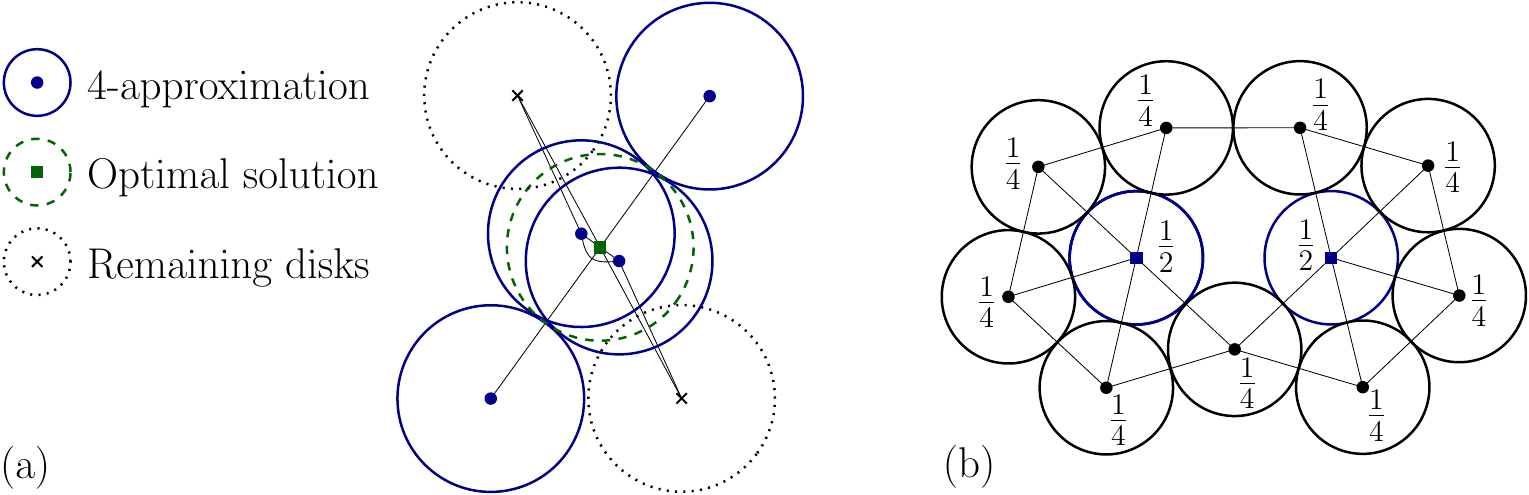}
 \caption{(a) Example in which the approximation ratio for the DS is exactly $4$. (b) Coin graph used in the example in which the approximation ratio for the WIS is $3{.}25$.}
 \label{f:lowerbound}
\end{figure}

While the approximation ratio for the WIS and the DS on UDGs is no greater than~$4$ (for constant-diameter inputs), we only know that the analysis is tight for the DS. Indeed, Fig.~\ref{f:lowerbound}(a) shows a DS instance in which our algorithm does not achieve an approximation ratio better than~$4$, even if we reduce the grid size and search for extreme points in a larger number of directions. 
In contrast, for the WIS, the best lower bound we are aware of is $3{.}25$, as shown in the following example. Let $P_1$ be the weighted point set from Fig.~\ref{f:lowerbound}(b), in which all adjacent vertices are at distance exactly $2$. Create another set $P_2$ by multiplying the coordinates of the points in $P_1$ by $1+\eps$, while multiplying their weights by $1-\eps$, for arbitrarily small $\eps > 0$. The set $P_2$ forms an independent set of weight just smaller than $3{.}25$, while the maximum independent set in $P_1$ has weight $1$. Since each vertex in $P_2$ has a smaller weight and is arbitrarily close to a vertex of $P_1$, the vertices of $P_2$ will be disregarded by the algorithm for the input instance $P_1 \cup P_2$.

The analysis of the $6$-approximation ratio for the WIS on rectangle intersection graphs leaves an even bigger gap. The best lower bound we are aware of is $13/3$, since the graph illustrated in Fig.~\ref{f:rect-graph} (with $13$ vertices and maximum independent set with size $3$) is in the graph class used in Lemma~\ref{l:rect}. In fact, it is possible that such class is $5$-colorable (the same graph in Fig.~\ref{f:rect-graph} shows it is not $4$-colorable, though). We remark that the need for $c$ colors does not mean that the ratio between the total weight of the vertices and the maximum weight of an independent set can be as high as $c$.

\begin{figure}
 \centering
 \includegraphics[scale=.45]{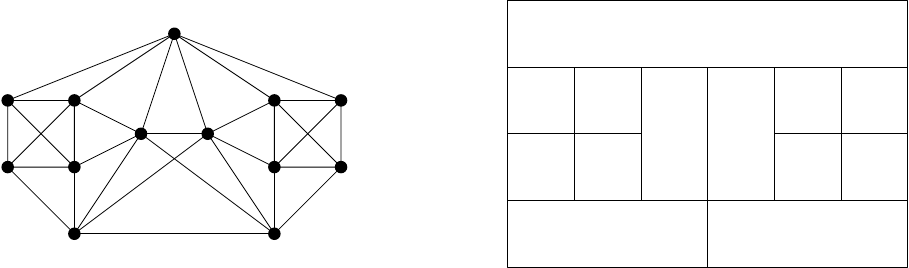}
 \caption{A $5$-chromatic graph in the class from Lemma~\ref{l:rect}, and its representation by slightly-overlapping rectangles.}
 \label{f:rect-graph}
\end{figure}

Several open problems remain. Is it possible to obtain an approximation ratio better than $4$ in $O(n \log n)$ time for the WIS on UDGs, or at least for the unweighted version? Can the linear-time approximation scheme for VC be generalized for the weighted version? Are the point coordinates really necessary, or is it possible to devise similar graph-based algorithms? Also, can our method be used to obtain better linear-time approximations to related problems on unit disk graphs such as finding the minimum-weight dominating set or the minimum connected dominating set? Is it possible to use the shifting coresets method to obtain a constant approximation for the WIS on disk graphs (of arbitrary radii) in linear time? Finally, is it possible to use similar ideas to derive improved linear-time approximations for problems on other classes of graphs such as planar graphs, bounded treewidth graphs, etc.?

\section*{Acknowledgements}

The authors would like to thank Raphael Machado, Mickael Montassier, Petru \mbox{Valicov} and Yann Vax\`es for the insightful discussions. The proofs of Theorems~\ref{t:WIS} and~\ref{t:VC} use the same techniques as in Hunt III \emph{et al}., which were in turn motivated by Baker's technique~\cite{ptas-geometric,baker94}.

This research was partially supported by the Brazilian agencies CAPES, CNPq, and FAPERJ. An extended abstract of this paper appeared in the 12th Workshop on Approximation and Online Algorithms (WAOA 2014).

\bibliographystyle{plain}
\bibliography{udg}

\end{document}